\tikzstyle{state}=[minimum size = 6.3mm, inner sep = 0.5mm, circle, draw]
\tikzstyle{statemdp}=[minimum size = 6.2mm, inner sep = 0mm, rounded corners=0, draw,rectangle split, rectangle split horizontal=false, rectangle split parts=2, rectangle split draw splits=false]
\tikzstyle{max}=[thick,draw,minimum size=1.2em,inner sep=0em]
\tikzstyle{min}=[diamond,thick,draw,minimum size=1.4em,%
\tikzstyle{ran}=[circle,thick,draw,minimum size=1.2em,%
\tikzstyle{act}=[circle,thick,draw,fill,minimum size=.7em,%
\tikzstyle{mc}=[rounded corners,thick,draw,minimum size=1.4em,%
\tikzstyle{tran}=[thick,draw,->,>=stealth]
\tikzstyle{loop left}=[tran, to path={.. controls +(150:.5)
\tikzstyle{loop right}=[tran, to path={.. controls +(30:1.5)
\tikzstyle{loop above}=[tran, to path={.. controls +(60:.5)
\tikzstyle{loop below}=[tran, to path={.. controls +(240:.5)
\newcommand{\game}{\mathsf{G}}
\newcommand{\sinit}{s}
\newcommand{\reward}{\rho}
\newcommand{\lr}{\mathit{mp}}
\newcommand{\lmp}{\mathit{lmp}}
\newcommand{\va}{\mathit{va}}
\newcommand{\outcome}[3]{\mathit{outcome}^{#1,#2}_{#3}}
\newcommand{\Nset}{\mathbb{N}}
\newcommand{\Qset}{\mathbb{Q}}
\newcommand{\run}{\lambda}
\newcommand{\init}{\mathit{Init}}
\newcommand{\mem}{\mathit{Mem}}
\newcommand{\up}{\mathit{Up}}
\newcommand{\sel}{\mathit{Const}}
\newcommand{\obj}{\Phi}
\newcommand{\calO}{\mathcal{O}}
\newcommand{\calV}{\mathcal{V}}
\newcommand{\calF}{\mathcal{F}}
\newcommand{\maxr}[1]{\mathit{max}_{\varrho_{#1}}}
\newcommand{\dimr}[1]{\mathit{dim}_{\varrho_{#1}}}
\newcommand{\pl}{\oplus}
\newcommand{\Succ}{\mathit{succ}}
\newcommand{\gfix}{\mathit{gfix}}
\newcommand{\reach}{\mathit{Reach}}
\newcommand{\mmod}{\mathrm{\ mod\ }}
\newcommand{\NP}{\mathsf{NP}}
\newcommand{\coNP}{\mathsf{coNP}}
\newcommand{\PSPACE}{\mathsf{PSPACE}}
\newcommand{\calT}{\mathcal{T}}
\newcommand{\Mod}[1]{\ (\text{mod}\ #1)}
\newcommand{\EXPTIME}{\mathsf{EXPTIME}}
\newcommand{\PTIME}{\mathsf{P}}
\tikzstyle{ran}=[circle,thick,draw,minimum size=2ex,inner sep=0em]
\tikzstyle{ran}=[circle,thick,draw,minimum size=2ex,inner sep=0em]
\newcommand{\vojta}[1]{}%
\newtheorem{proposition}[theorem]{\bfseries  Proposition}
\newtheorem{claim}[theorem]{\bfseries Claim}
\newcommand{\theoremlike}[2]{\par\medskip\penalty-250\refstepcounter{theorem}{{\bfseries\noindent#2
			\ref{#1}.}}}
\newcommand{\thmhelperpre}[2]{\theoremlike{#1}{#2}}
\newcommand{\thmhelperpost}{\par\medskip}
\newenvironment{reftheorem}[1]{\thmhelperpre{#1}{Theorem}}{\thmhelperpost}
\newenvironment{reflemma}[1]{\thmhelperpre{#1}{Lemma}}{\thmhelperpost}
\newenvironment{refproposition}[1]{\thmhelperpre{#1}{Proposition}}{\thmhelperpost}
\begin{document}
\title{Stability in Graphs and Games}

\author[1]{Tom\'a\v{s} Br\'azdil}
\author[2]{Vojt\v{e}ch~Forejt}
\author[1]{Anton\'in Ku\v{c}era}
\author[3]{Petr Novotn\'y}
\affil[1]{Faculty of Informatics, Masaryk University, Czech Republic}
\affil[2]{Department of Computer Science, University of Oxford, UK}
\affil[3]{IST Austria, Klosterneuburg, Austria}

\maketitle

\begin{abstract}
We study graphs and two-player games in which rewards are assigned to states,
and the goal of the players is to satisfy or dissatisfy certain property
of the generated outcome, given as a mean payoff property. 
Since the notion of mean-payoff does not reflect possible fluctuations from the mean-payoff along a run, we propose definitions and algorithms for capturing the
\emph{stability} of the system, and give algorithms for deciding if a given mean payoff
and stability objective can be ensured in the system.
\end{abstract}

\section{Introduction}
\label{sec-intro}

Finite-state graphs and games are used in formal verification as foundational models that capture behaviours of systems
with controllable decisions and possibly also the presence of adversarial environment. States correspond to possible configurations
of a system, and edges describe how configurations can change. In a game, each state is owned by one of two players, and
the player owning the state decides what edge will be taken. A graph is a game where only one of the players is present.
When the choice of the edges is resolved, we obtain an {\em outcome} which is an infinite sequence of states and edges describing
the execution of the system.

The long-run average performance of a run is measured by the associated \emph{mean-payoff}, which is the limit average reward per visited state along the run. It is well known that memoryless deterministic strategies suffice to optimise the mean payoff, and the corresponding decision problem is in $\NP\cap\coNP$ for games and in $\PTIME$ for graphs. If the rewards assigned to the states are multi-dimensional vectors of numbers, then the problem becomes $\coNP$-hard for games~\cite{Velner2015177}.\vojta{check the reference. Also, what about graphs?}

Although the mean payoff provides an important metric for the average behaviour of the system, by definition it neglects all information about the fluctuations from the mean payoff along the run. For example, a ``fully stable''  run where the associated sequence of rewards is $1,1,1,1,\ldots$ has the same mean payoff (equal to $1$) as a run producing $n,0,0,\ldots,n,0,0,\ldots$ where a state with 
the reward $n$ is visited once in $n$ transitions. In many situations, the first run is much more desirable that the second one. Consider, e.g., a video streaming application which needs to achieve a sufficiently high bitrate (a long-run average number of bits delivered per second) but, in addition, a sufficient level of ``stability'' to prevent buffer underflows and overflows which would cause data loss and stuttering. Similar problems appear also in other contexts. For example, production lines should be not only efficient (i.e., produce the number of items per time unit as high as possible), but also ``stable'' so that the available stores are not overfilled and there is no ``periodic shortage'' of the produced items. A food production system should not only produce a sufficiently large amount of food per day on average, but also a certain amount of food daily. These and similar problems motivate the search for a suitable formal notion capturing the intuitive understanding of ``stability'', and developing algorithms that can optimize the performance under given stability constraints. That is, we are still seeking for a strategy optimizing the mean payoff, but the search space is restricted to the subset of all strategies that achieve a given stability constraint.

Since the mean-payoff $\lr(\lambda)$ of a given run $\lambda$ can be seen as the average reward of a state visited along~$\lambda$, a natural idea is to define the stability of~$\lambda$ as \emph{sample variance} of the reward assigned to a state along~$\lambda$. More precisely, let $r_i$ be the reward if the \mbox{$i$-th} state visited by $\lambda$, and let $c_0,c_1,\ldots$ be an infinite sequence where $c_i = (\lr(\lambda) - r_i))^2$. The \emph{long-run variance} of the reward assigned to a state along $\lambda$, denoted by $\va(\lambda)$, is the limit-average of $c_0,c_1,\ldots$ The notion of long-run variance has been introduced and studied for Markov decision processes in \cite{BCFK13}. If $\va(\lambda)$ is small, then large fluctuations from $\lr(\lambda)$ are rare. Hence, if we require that a strategy should optimize mean-payoff while keeping the long-run variance below a given threshold, we in fact impose a \emph{soft} stability constraint which guarantees that ``bad things do not happen too often''. This may or may not be sufficient.

In this paper, we are particularly interested in formalizing \emph{hard} stability constraints which guarantee that ``bad things never happen''. We introduce a new type of objectives called \emph{window-stability multi-objectives} that can express a rich set of hard stability constraints, and we show that the set of \emph{all} strategies that achieve a given window-stability multi-objective can be characterized by an effectively constructible \emph{finite-memory permissive strategy scheme}. From this we obtain a meta-theorem saying that if an objective (such as mean-payoff optimization) is solvable for finite-state games (or graphs), then the same objective is solvable also under a given window-stability multi-objective constraint. We also provide the associated upper and lower complexity bounds demonstrating that the time complexity of our algorithms is ``essentially optimal''.

More specifically, a single \emph{window-stability objective} (inspired by \cite{raskin-windows}, see Related work below) is specified by a window length $W \geq 1$, a checkpoint distance $D \geq 1$, and two bounds $\mu$ and $\nu$. For technical reasons, we assume that $D$ divides $W$. Every run $\lambda = s_0,s_1,s_2,\ldots$ then contains infinitely many \emph{checkpoints}  $s_0,s_{D},s_{2D},s_{3D},\ldots$ The objective requires that the average reward assigned to the states $s_j,\ldots, s_{j+W-1}$, where $s_j$ is a checkpoint, is between $\mu$ and $\nu$. In other words, the ``local mean-payoff'' computed for the states fitting into a window of length $W$ starting at a checkpoint must be within the ``acceptable'' bounds $\mu$ and $\nu$. The role of $W$ is clear, and the intuition behind $D$ is the following. Since $D$ divides $W$, there are two extreme cases: $D=1$ and $D=W$. For $D=1$, the objective closely resembles the standard ``sliding window'' model over data streams~\cite{DGIM:data-streams-sliding-window}; we require that the local mean-payoff stays within the acceptable bounds ``continuously'', like the ``local bitrate'' in video-streaming. If $D=W$, then the windows do not overlap at all. This is useful in situations when we wish to guarantee some time-bounded periodic progress. For example, if we wish to say that the number of items produced per day stays within given bounds, we set $W$ so that it represents the (discrete) time of one day and put $D=W$. However, there can be also scenarios when we wish to check the local mean-payoff more often then once during $W$ transitions, but not ``completely continuously''. In these cases, we set $D$ to some other divisor of~$W$. A \emph{window-stability multi-objective} is a finite conjunction of single window-stability objectives, each with dedicated rewards and parameters. Hence, window-stability multi-objectives allow for capturing more delicate stability requirements such as  ``a factory should produce between 1500 and 1800 gadgets every week, and in addition, within every one-hour period at least 50 computer chips are produced, and in addition, the total amount of waste produced in 12 consecutive hours does not exceed 500~kg.''

\smallskip
\noindent
\textbf{Our contribution} can be summarized as follows:
\smallskip

\noindent
(A) We introduce the concept of window-stability multi-objectives. 
\smallskip

\noindent
(B) We show that there is an algorithm which inputs a game $\game$ and a window-stability multi-objective $\Delta$, and outputs a \emph{finite-state permissive strategy scheme} for $\Delta$ and $\game$. A finite-state permissive strategy scheme for $\Delta$ and $\game$ is a finite-state automaton $\Gamma$ which reads the history of a play and constraints the moves of Player~$\Box$ (who aims at satisfying~$\Delta$) so that a strategy $\sigma$ achieves $\Delta$ in $\game$ iff $\sigma$ is admitted by $\Gamma$. Hence, we can also compute a synchronized product $\game \times \Gamma$ which is another game where the set of \emph{all} strategies for Player~$\Box$ precisely represents the set of all strategies for Player~$\Box$ in $\game$ which achieve the objective $\Delta$. Consequently, \emph{any} objective of the form $\Delta \wedge \Psi$ can be solved for $\game$ by solving the objective $\Psi$ for $\game \times \Gamma$. In particular, this is applicable to mean-payoff objectives, and thus we solve the problem of optimizing the mean-payoff under a given window-stability multi-objective constraint. We also analyze the time complexity of these algorithms, which reveals that the crucial parameter which negatively influences the time complexity is the number of checkpoints in a window (i.e., $W/D$).
\smallskip

\noindent
(C) We complement the upper complexity bounds of the previous item by lower complexity bounds that indicate that the time complexity of our algorithms is ``essentially optimal''. Some of these results follow immediately from existing works~\cite{Velner2015177,raskin-windows}. The main contribution is the result which says that solving a (single) window-stability objective is $\PSPACE$-hard for games and $\NP$-hard for graphs, even if all numerical parameters ($W$, $D$, $\mu$, $\nu$, and the rewards) are encoded in \emph{unary}. The proof is based on novel techniques and reveals that the number of checkpoints in a window (i.e., $W/D$) is a crucial parameter which makes the problem computationally hard. The window stability objective constructed in the proof satisfies $D=1$, and the tight window overlapping is used to enforce a certain consistency in Player~$\Box$ strategies.
\smallskip

\noindent
(D) For variance-stability, we argue that while it is natural in terms of using standard mathematical definitions,
it does not prevent unstable behaviours. In particular, we show that the variance-stability objective may demand an infinite-memory strategy which switches between two completely different modes of behaviour with smaller and smaller frequency. We also show that the associated variance-stability problem with single-dimensional rewards is in $\NP$ for graphs. For this we use some of the results from~\cite{BCFK13} where the variance-stability is studied in the context of Markov decision processes. The main difficulty is a translation from randomized stochastic-update strategies used in~\cite{BCFK13} to deterministic strategies.

\smallskip
\noindent
\textbf{Related work.}
Multi-dimensional mean-payoff games were studied in~\cite{Velner2015177}, where it was shown that the lim-inf problem, relevant to our setting,
is $\coNP$-hard. Further,~\cite{CRR-acta14}
studies memory requirements for the objectives, and~\cite{Velner2015} shows that for a ``robust''
extension (where Boolean combinations of bounds on the resulting vector of mean-payoffs are allowed) the problem becomes undecidable. Games with quantitative objectives in which both lower
and upper bound on the target value of mean-payoff is given were studied in~\cite{DBLP:conf/fsttcs/HunterR14}. We differ from these approaches by requiring the “interval” bounds to be satisfied within finite windows, making our techniques and results very different.

As discussed above, we rely on the concept of \emph{windows}, which was in the synthesis setting studied in~\cite{raskin-windows} (see also~\cite{HPR:foggy-windows}), as a conservative approximation of the standard mean-payoff objective. More concretely, the objectives in $\cite{raskin-windows}$ are specified by a maximal window length $W$ and a threshold $t$. The task is to find a strategy that achieves the following property of runs: a run can be partitioned into contiguous windows of length \emph{at most} $W$ such that in each window, the reward accumulated inside the window divided by the window length is at least~$t$. The objective ensures a local progress in accumulating the reward, and it was not motivated by capturing stability constraints. There are also technical differences in solving these objectives and in the associated complexity bounds, mainly due to the absence of window overlapping (in particular, the $\PSPACE$ lower bound discussed in the point~(C) above does not carry over to the setting of~\cite{raskin-windows}).  

The notion of finite-state permissive strategy scheme is based on the concept of \emph{permissive strategies} \cite{DBLP:journals/ita/BernetJW02} and multi-strategies \cite{Bouyer2011,patricia-multistrategies}. 

The notion of long-run variance has been introduced and studied for Markov decision processes in~\cite{BCFK13}.  Since we consider deterministic strategies, none of our results is a special case of~\cite{BCFK13}, and we have to overcome new difficulties as it is explained in Section~\ref{sec-results-variance}.

More generally, our paper fits into an active field of multi-objective strategy synthesis, where some objectives capture the ``hard'' constraints and the other ``soft'', often quantitative, objectives. Examples of recent results in this area include~\cite{Bohy:2013:SLS:2450387.2450406}, where a 2-$\EXPTIME$ algorithm is given for the synthesis of combined  LTL and mean-payoff objectives, \cite{DBLP:conf/lics/ChatterjeeHJ05}, where a combination of parity and mean-payoff performance objectives is studied, or \cite{CHL:obliging-games}, where the controlling player must satisfy a given $\omega$-regular objective while allowing the adversary to satisfy another ``environmental'' objective. 

\section{Preliminaries}
\label{sec:prelims}

We use $\Nset$, $\Nset_0$, and $\Qset$ to denote the sets of positive integers, non-negative integers, and rationals, respectively. Given a set $M$, we use $M^*$ to denote the set of all finite sequences (words) over~$M$, including the empty sequence. 

A \emph{game} is a tuple $\game = (S, (S_\Box,S_\Diamond),E)$ where $S$ is a non-empty set of \emph{states}, $(S_\Box,S_\Diamond)$ is a partition of $S$ into two subsets controlled by Player~$\Box$ and Player~$\Diamond$, respectively, and $E \subseteq S \times S$ are the \emph{edges} of the game such that for every $s \in S$ there is at least one edge $(s,t) \in E$. A \emph{graph} is a game such that $S_\Diamond = \emptyset$.
A \emph{run} in $\game$ is an infinite path in the underlying directed graph of $\game$. An \emph{objective} is a Borel property\footnote{Recall that the set of all runs can be given the standard Cantor topology. A property is Borel if the set of all runs satisfying the property belongs to the  $\sigma$-algebra generated by all open sets in this topology.} of runs. Note that the class of all objectives is closed under conjunction.

A \emph{strategy} for player~$\odot$, where $\odot \in \{\Box,\Diamond\}$ is a function $\tau : S^* S_{\odot} \rightarrow S$ satisfying that $(s,\sigma(hs))\in E$ for all $s\in S_{\odot}$ and $h\in S^*$. The sets of all strategies of Player~$\Box$ and Player~$\Diamond$ are denoted by $\Sigma_\game$ and $\Pi_\game$, respectively. When $\game$ is understood, we write just $\Sigma$ and $\Pi$.  A pair of strategies $(\sigma,\pi) \in \Sigma \times \Pi$ together with an initial state $\sinit$ induce a unique run $\outcome{\sigma}{\pi}{\sinit}$ in the standard way. We say that a strategy $\sigma \in \Sigma$ \emph{achieves} an objective $\obj$ in a state~$s$ if $\outcome{\sigma}{\pi}{\sinit}$ satisfies $\obj$ for every $\pi \in \Pi$. The set of all $\sigma \in \Sigma$ that achieve $\obj$ in $s$ is denoted by $\Sigma^{\obj}(s)$. An objective $\obj$ is \emph{solvable} for a given subclass $\mathcal{G}$ of finite-state games if there is an algorithm which inputs $\game \in \mathcal{G}$ and its state $s$, and decides whether $\Sigma^{\obj}(s) = \emptyset$. If $\Sigma^{\obj}(s) \neq \emptyset$, then the algorithm also outputs a 
(finite description of) $\sigma \in \Sigma^{\obj}(s)$.

We often consider strategies of Player~$\Box$ tailored for a specific initial state. A finite sequence of states $s_0,\ldots,s_n$ is \emph{consistent} with a given $\sigma \in \Sigma$ if $s_0,\ldots,s_n$ is a finite path in the graph of $\game$, and $\sigma(s_0,\ldots,s_{i}) = s_{i+1}$ for every $0 \leq i < n$ where $s_i \in V_\Box$. Given $\sigma,\sigma' \in \Sigma$ and $\sinit \in S$, we say that $\sigma$ and $\sigma'$ are \emph{$\sinit$-equivalent}, written $\sigma \equiv_{\sinit} \sigma'$, if $\sigma$ and $\sigma'$
agree on all finite sequences of states initiated in $\sinit$ that are consistent with $\sigma$. Note that if 
$\sigma \equiv_{\sinit} \sigma'$, then $\outcome{\sigma}{\pi}{\sinit} = \outcome{\sigma'}{\pi}{\sinit}$ for every $\pi \in \Pi$. 

A \emph{reward function} $\varrho: S \rightarrow \Nset_0^k$, where $k \in \Nset$, assigns non-negative integer vectors to the states of~$\game$. We use $\dimr{}$ to denote the dimension $k$ of $\varrho$, and $\maxr{}$ to denote the maximal number employed by $\varrho$, i.e., $\maxr{} = \max \{\varrho(s)[i] \mid 1 \leq i \leq k, s \in S\}$. An objective is \emph{reward-based} if its defining property depends just on the sequence of rewards assigned to the states visited by a run. 

For every run $\run=s_0,s_1,\ldots$ of $\game$, let
$
   \lr_\varrho(\run) = \liminf_{n\rightarrow \infty} \frac{1}{n+1} \sum_{i=0}^n \varrho(s_i)
$
be the \emph{mean payoff} of $\run$, where the $\liminf_{n\rightarrow \infty}$ is taken component-wise.
A \emph{mean-payoff} objective is a pair $(\varrho,\kappa)$, where $\varrho : S \rightarrow \Nset_0^k$ is a reward function and $b \in \Qset^k$. A run $\run$ satisfies a mean-payoff objective $(\varrho,b)$ if $\lr_\varrho(\run) \geq b$.

Similarly, the \emph{long-run variance of the reward} of a run $\run$ is defined by $\va_\varrho(\run) = \limsup_{n\rightarrow \infty} \frac{1}{n+1} \sum_{i=0}^n (\reward(s_i) - \lr(run))^2$; intuitively, the long-run variance is a limit inferior of sample variances where the samples represent longer and longer run prefixes. A \emph{variance-stability} objective is a triple $(\varrho,b,c)$, where $\varrho : S \rightarrow \Nset_0^k$ is a reward function and $b,c \in \Qset^k$.  A run $\run$ satisfies a variance-stability objective $(\varrho,b,c)$ if $\lr_\varrho(\run) \geq b$ and  $\va_\varrho(\run) \leq c$.

Let $W \in \Nset$ be a \emph{window size} and $D \in \Nset$ a \emph{checkpoint distance} such that $D$ divides~$W$. For every $\ell \in \Nset_0$, the \emph{local mean payoff at the $\ell^{\mathit{th}}$ checkpoint in a run $\run$} is defined by
$\lmp_{W,D,\varrho,\ell}(\run)  =  \frac{1}{W} \sum_{i=0}^{W{-}1} \varrho(s_{\ell\cdot D+i})$
where $(\vec{v}/a)[i] = \vec{v}[i]/a$. 
Thus, every run $\run$ determines the associated infinite sequence $\lmp_{W,D,\varrho,0}(\run),\lmp_{W,D,\varrho,1}(\run),\lmp_{W,D,\varrho,2}(\run),\dots$ of local mean payoffs.
A \emph{window-stability} objective is a tuple $\obj = (W,D,\varrho,\mu,\nu)$, where $W,D \in \Nset$ such that $D$ divides $W$, $\varrho : S \rightarrow \Nset_0^k$ is a reward function, and $\mu,\nu \in \Qset^k$. A run $\run$ satisfies $\obj$ if, for all $\ell \in \Nset$, we have that $\mu \leq \lmp_{W,D,\varrho,\ell}(\run) \leq \nu$. A \emph{window-stability multi-objective} is a finite conjunction of window-stability objectives.

In this paper, we study the solvability of variance-stability objectives, window-stability multi-objectives, and objectives of the form $\Delta \wedge \Psi$ where $\Delta$ is a window-stability multi-objective and $\Psi$ a mean-payoff objective.

\section{The Window-Stability Multi-Objectives}
\label{sec-results-window}

This section is devoted to the window-stability multi-objectives and objectives of the form $\Delta \wedge \Psi$, where $\Delta$ is a window-stability multi-objective. In Section~\ref{sec-algorithm}, we show how to solve these objectives for finite-state games, and we derive the corresponding upper complexity bounds. The crucial parameter which makes the problem computationally hard is the number of checkpoints in a window. In Section~\ref{sec-hardness}, we show that this blowup is unavoidable assuming the expected relationship among the basic complexity classes.

\subsection{Solving Games with Window-Stability Multi-Objectives}
\label{sec-algorithm}

We start by recalling the concept of \emph{most permissive strategies} which was introduced in \cite{DBLP:journals/ita/BernetJW02}. Technically, we define \emph{permissive strategy schemes} which suit better our needs, but the underlying idea is the same.

\begin{definition}
\label{def-delector}
    Let $\game = (S, (S_\Box,S_\Diamond),E)$ be a game. A \emph{(finite-memory) strategy scheme} for $\game$ is a tuple $\Gamma = (\mem,\up,\sel,\init)$, where $\mem \neq \emptyset$ is a finite set of \emph{memory elements}, $\up : S \times \mem \rightarrow \mem$ is a \emph{memory update} function, $\sel : S_\Box \times \mem \rightarrow 2^{S}$ is a \emph{constrainer} such that $\sel(s,m) \subseteq \{s' \in S \mid (s,s') \in E\}$, and $\init : S \rightharpoonup M$ is a partial function assigning initial memory elements to some states of~$S$.
	
	We require\footnote{Alternatively, we could stipulate $\sel(s,m) \neq \emptyset$ for all $(s,m) \in S_\Box \times \mem$, but this would lead to technical complications in some proofs. The presented variant seems slightly more convenient.} that $\sel(s,m) \neq \emptyset$ for all $(s,m) \in \reach(\init)$ such that $s \in S_\Box$. Here, $\reach(\init)$ is the least fixed-point of $\calF : 2^{S \times \mem} \rightarrow 2^{S \times \mem}$ where $\calF(\Omega)$ consists of all $(s',m')$ such that $(s',m') \in \init$, or there is some $(s'',m'') \in \Omega$ such that $(s'',s') \in E$ and $\up(s'',m'') = m'$; if $s'' \in S_\Box$, we further require $s' \in \sel(s'',m'')$.
	\qed
\end{definition}

We say that $\Gamma$ is \emph{memoryless} if $\mem$ is a singleton. Every strategy scheme $\Gamma = (\mem,\up,\sel,\init)$ for a game $\game = (S, (S_\Box,S_\Diamond),E)$ determines a  game $\game_\Gamma = (S {\times} \mem,(S_\Box {\times} \mem,S_\Diamond {\times} \mem),F)$, where 
\begin{itemize}
\item for every $(s,m) \in S_\Diamond {\times} \mem$, $((s,m),(s',m')) \in F$ iff $\up(s,m) = m'$ and $(s,s') \in E$;
\item for every $(s,m) \in S_\Box {\times} \mem$ where $\sel(s,m) \neq \emptyset$, we have that $((s,m),(s',m')) \in F$ iff $\up(s,m) = m'$ and $(s,s') \in \sel(s,m)$;
\item for every $(s,m) \in S_\Box {\times} \mem$ where $\sel(s,m) = \emptyset$, we have that $((s,m),(s',m')) \in F$ iff $s = s'$ and $m=m'$.
\end{itemize}  

A strategy $\sigma \in \Sigma_\game$ is \emph{admitted} by $\Gamma$ in a given $s \in S$ if $\init(s) \neq {\perp}$ and for every finite path $s_0,\ldots,s_n$ in $\game$ initiated in $s$ which is consistent with $\sigma$ there is a finite path  $(s_0,m_0),\ldots,(s_n,m_n)$ in $\game_\Gamma$ such that $m_0 = \init(s_0)$ and $s_{i+1} \in \sel(s_i,m_i)$ for all \mbox{$0 \leq i <n$} where $s_i \in S_\Box$. Observe that if $\sigma$ is admitted by $\Gamma$ in $s$, then $\sigma$ naturally induces a strategy $\tau[\sigma,s] \in \Sigma_{\game_\Gamma}$ which is unique up to $\equiv_{(s_0,m_0)}$. Conversely, every $\tau \in \Sigma_{\game_\Gamma}$ and every $s \in S$ where $\init(s) \neq {\perp}$ induce a strategy $\sigma[\tau,s] \in \Sigma_\game$ such that, for every finite path $(s_0,m_0),\ldots,(s_n,m_n)$ initiated in $(s,\init(s))$ which is consistent with $\tau$, we have that 
\(
  \sigma[\tau,s](s_0,\ldots,s_n) = s_{n+1}  \mbox{ iff } \tau((s_0,m_0),\ldots,(s_n,m_n))=(s_{n+1},m_{n+1}) \,. 
\)
Note that $\sigma[\tau,s]$ is determined uniquely up to $\equiv_s$. 
\begin{definition}
   Let $\game$ be a game, $\Gamma$ a strategy scheme for $\game$, $\Lambda_\game \subseteq \Sigma_\game$, $\Lambda_{\game_\Gamma} \subseteq \Sigma_{\game_\Gamma}$, and $s \in S$. We write $\Lambda_\game \approx_s \Lambda_{\game_\Gamma}$ if the following conditions are satisfied:
   \begin{itemize}  
	   \item Every $\sigma \in \Lambda_\game$ is admitted by $\Gamma$ in $s$, and there is $\tau \in \Lambda_{\game_\Gamma}$ such that $\tau[\sigma,s] \equiv_{(s,\init(s))} \tau$.
	   \item For every $\tau \in \Lambda_{\game_\Gamma}$ there is $\sigma \in \Lambda_\game$ such that $\sigma[\tau,s] \equiv_s \sigma$.
   \end{itemize}

   \noindent
   Further, we say that $\Gamma$ is \emph{permissive} for an objective~$\obj$ if $\Sigma^{\obj}_{\game}(s) \approx_s \Sigma_{\game_\Gamma}(s)$ for all $s \in S$, where $\Sigma_{\game_\Gamma}(s)$ is either $\emptyset$ or $\Sigma_{\game_\Gamma}$, depending on whether $\init(s) = {\perp}$ or not, respectively.
\end{definition}

The next proposition follows immediately. 
\begin{proposition}
	\label{prop-conj}
	Let $\game$ be a game, $\obj,\Psi$ objectives, and $\Gamma$ a strategy scheme permissive for~$\obj$. Then, for every $s \in S$ we have that $\Sigma^{\obj \wedge \Psi}_{\game}(s) \approx_s \Sigma^{\Psi}_{\game_\Gamma}(s)$.
\end{proposition}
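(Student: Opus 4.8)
The plan is to unfold the relation $\approx_s$ into its two defining bullets and establish each directly, feeding in the permissiveness of $\Gamma$ for $\obj$ as a black box and combining it with a single elementary fact: the strategy translations $\sigma \mapsto \tau[\sigma,s]$ and $\tau \mapsto \sigma[\tau,s]$ preserve outcomes up to the projection $(t,m) \mapsto t$ from $S \times \mem$ to $S$, and any reward-based objective is invariant under this projection because the reward of a state $(t,m)$ in $\game_\Gamma$ is inherited as $\varrho(t)$. First I would record this workhorse precisely: since the memory component of a run of $\game_\Gamma$ is updated deterministically by $\up$, every run $\outcome{\tau}{\pi'}{(s,\init(s))}$ of $\game_\Gamma$ projects to a run of $\game$, and each opponent $\pi \in \Pi_\game$ lifts to an opponent $\pi' \in \Pi_{\game_\Gamma}$ producing the matching projected run; hence the outcomes of $\tau$ and of the induced $\sigma[\tau,s]$ coincide after projection, and likewise for $\sigma$ and $\tau[\sigma,s]$. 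Together with the already-noted fact that $\sigma \equiv_{\sinit} \sigma'$ implies equality of all outcomes from $\sinit$, this lets me transport satisfaction of $\Psi$ freely between the two games.

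For the forward bullet, I would take $\sigma \in \Sigma^{\obj \wedge \Psi}_{\game}(s)$. In particular $\sigma \in \Sigma^{\obj}_{\game}(s)$, so permissiveness gives that $\sigma$ is admitted by $\Gamma$ in $s$ --- whence $\init(s) \neq \perp$ and $\Sigma^{\Psi}_{\game_\Gamma}(s) = \Sigma^{\Psi}_{\game_\Gamma}$ --- and yields some $\tau \in \Sigma_{\game_\Gamma}$ with $\tau[\sigma,s] \equiv_{(s,\init(s))} \tau$. It only remains to see $\tau \in \Sigma^{\Psi}_{\game_\Gamma}$: by the equivalence every outcome of $\tau$ equals an outcome of $\tau[\sigma,s]$, which projects to an outcome of $\sigma$; the latter satisfies $\Psi$ since $\sigma$ achieves $\Psi$, so by projection-invariance the outcome of $\tau$ satisfies $\Psi$ as well.

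For the backward bullet, I would take $\tau \in \Sigma^{\Psi}_{\game_\Gamma}(s)$; then $\init(s) \neq \perp$, so $\tau \in \Sigma_{\game_\Gamma} = \Sigma_{\game_\Gamma}(s)$, and permissiveness provides $\sigma \in \Sigma^{\obj}_{\game}(s)$ with $\sigma[\tau,s] \equiv_s \sigma$. Running the same outcome/projection argument in the opposite direction shows every outcome of $\sigma$ satisfies $\Psi$: it equals an outcome of $\sigma[\tau,s]$, which is the projection of an outcome of $\tau$, which satisfies $\Psi$. Hence $\sigma$ achieves both $\obj$ and $\Psi$, i.e. $\sigma \in \Sigma^{\obj \wedge \Psi}_{\game}(s)$, with the required $\sigma[\tau,s] \equiv_s \sigma$.

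The only genuine work --- and the step I would be most careful with --- is the outcome-correspondence bookkeeping underlying the workhorse: confirming that projecting a $\game_\Gamma$-run to its first components yields a legal $\game$-run and that each $\game$-run lifts uniquely once the initial memory is fixed, plus that the opponent strategies on the two sides match up. This is routine from the definition of the edge relation $F$ of $\game_\Gamma$ (the memory is a deterministic function of the history via $\up$) and from rewards ignoring the memory coordinate, which is exactly why the statement ``follows immediately'' and needs no new idea.
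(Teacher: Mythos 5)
Your proposal is correct and is exactly the argument the paper leaves implicit --- the paper offers no written proof beyond ``follows immediately,'' and the intended justification is precisely your unfolding of $\approx_s$, the permissiveness of $\Gamma$ for $\obj$, and the outcome correspondence under the projection $(t,m)\mapsto t$ (with opponents lifting/projecting because the memory update is deterministic). The only nitpick is that you justify projection-invariance of $\Psi$ by assuming it is reward-based, whereas the proposition is stated for arbitrary objectives; for a general $\Psi$ the invariance is definitional, since interpreting $\Psi$ on runs of $\game_\Gamma$ via the projection is the only way $\Sigma^{\Psi}_{\game_\Gamma}(s)$ makes sense.
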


Another simple but useful observation is that the class of objectives for which a permissive strategy scheme exists is closed under conjunction. 

\begin{proposition}
	\label{prop-sel-conj}
	Let $\game = (S, (S_\Box,S_\Diamond),E)$ be a finite-state game, and $n \in \Nset$. Further, for every $1 \leq i \leq n$, let $\Gamma_i = (\mem_i,\up_i,\sel_i,\init_i)$ be a strategy scheme for $\game$ which is permissive for $\obj_i$. Then there is a strategy scheme for $\game$ with $\prod_{i=1}^n |\mem_i|$ memory elements computable in $\calO(|S|^2 \cdot |E| \cdot \prod_{i=1}^n |\mem_i|^2)$ time which is permissive for $\obj_1 \wedge \cdots \wedge \obj_n$.
\end{proposition}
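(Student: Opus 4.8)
The plan is to build $\Gamma$ as the synchronous product of the $\Gamma_i$, followed by a safety repair that restores the validity condition of Definition~\ref{def-delector}. First I would define the candidate scheme $\Gamma' = (\mem,\up,\sel',\init')$ by taking $\mem = \prod_{i=1}^n \mem_i$ (so $|\mem| = \prod_i |\mem_i|$, as required), component-wise update $\up(s,(m_1,\dots,m_n)) = (\up_1(s,m_1),\dots,\up_n(s,m_n))$, the intersected constrainer $\sel'(s,\vec m) = \bigcap_{i=1}^n \sel_i(s,m_i)$, and $\init'(s) = (\init_1(s),\dots,\init_n(s))$ when all $\init_i(s) \neq {\perp}$ and $\init'(s) = {\perp}$ otherwise. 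Since the $i$-th component of the memory reached along a history is exactly the memory that $\Gamma_i$ would reach on the same history, a routine induction gives the \emph{Key Correspondence}: a finite path from $s$ is consistent with $\sel'$ iff it is consistent with every $\sel_i$, and hence a strategy $\sigma$ (with $\init'(s)\neq{\perp}$) is admitted by $\Gamma'$ in $s$ iff it is admitted by every $\Gamma_i$ in $s$.

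The difficulty is that $\Gamma'$ need not be a \emph{valid} scheme: even though each $\sel_i(s,m_i)$ is nonempty, the intersection $\sel'(s,\vec m)$ can be empty at a reachable $S_\Box$-configuration, precisely when the $\obj_i$ are individually but not jointly enforceable against Player~$\Diamond$. (For instance, if $\obj_1,\obj_2$ each force $\Box$ along a different, incompatible edge out of a state, then $\sel'$ is empty there while that configuration lies in $\reach(\init')$.) To repair this I would compute, inside the product game $\game_{\Gamma'}$, the greatest set $R$ of configurations that is safe in the sense that every $S_\Box$-configuration of $R$ keeps a $\sel'$-successor in $R$ and every $S_\Diamond$-configuration of $R$ has all successors in $R$; equivalently, $R$ is the complement of Player~$\Diamond$'s attractor to the set $B$ of $S_\Box$-configurations with empty $\sel'$. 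This is a standard $\gfix$/$\safe$-style greatest-fixed-point computation. I then set $\sel(s,\vec m) = \sel'(s,\vec m) \cap \{s' \mid (s',\up(s,\vec m)) \in R\}$ and $\init(s) = \init'(s)$ if $(s,\init'(s)) \in R$ and ${\perp}$ otherwise, leaving $\mem$ and $\up$ unchanged so that the memory bound is preserved. By construction every reachable $S_\Box$-configuration of the resulting $\Gamma$ lies in $R$ and retains a successor, so $\Gamma$ meets the validity requirement.

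It then remains to verify $\Sigma^{\obj_1 \wedge \cdots \wedge \obj_n}_\game(s) \approx_s \Sigma_{\game_\Gamma}(s)$ for all $s$. Combining the Key Correspondence with the permissiveness of each $\Gamma_i$ (unfolding $\approx_s$ for $\Gamma_i$ shows that $\sigma$ is admitted by $\Gamma_i$ in $s$ iff $\sigma$ achieves $\obj_i$ in $s$, up to $\equiv_s$), admission by $\Gamma$ becomes equivalent to simultaneously achieving all the $\obj_i$, i.e.\ to achieving $\obj_1 \wedge \cdots \wedge \obj_n$. The safety region $R$ guarantees both that an admitted strategy is never forced out of $\bigcap_i \sel_i$ by the adversary and that $\init(s)\neq{\perp}$ holds exactly when the conjunction is achievable from $s$; this is what makes the equivalence hold also in the degenerate case where $\Sigma^{\obj_1 \wedge \cdots \wedge \obj_n}_\game(s)=\emptyset$. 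Pushing these equivalences through the maps $\sigma \mapsto \tau[\sigma,s]$ and $\tau \mapsto \sigma[\tau,s]$ produces the two clauses of $\approx_s$.

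The only non-routine point, and the main obstacle, is exactly this validity/joint-achievability step: the bare product is semantically correct but is not a legal strategy scheme, and the safety repair — together with the observation that it does not enlarge the memory — is what makes the construction go through. The quoted running time $\calO(|S|^2 \cdot |E| \cdot \prod_{i=1}^n |\mem_i|^2)$ is then just the cost of this fixed-point computation on $\game_{\Gamma'}$, whose configuration space has size $|S|\cdot\prod_{i=1}^n|\mem_i|$.
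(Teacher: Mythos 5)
Your construction is essentially the paper's own: the paper also forms the synchronous product with component-wise memory update, observes that the intersected constrainer may become empty, and repairs this by computing a greatest fixed point $\gfix(\calF)$ (your safety region $R$) that keeps a $\sel$-successor for $\Box$-configurations and all successors for $\Diamond$-configurations, then restricts $\sel$ and $\init$ to it, with the same complexity accounting. The proposal is correct and matches the paper's argument, including the identification of the emptiness of the intersection as the one non-routine obstacle.
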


As it was noted in \cite{DBLP:journals/ita/BernetJW02}, permissive strategy schemes do not exist for objectives which admit non-winning infinite runs that do not leave the winning region of player~$\Box$, such as reachability, B\"{u}chi, parity, mean-payoff, etc. On the other hand, permissive strategy schemes exists for ``time bounded'' variants of these objectives. Now we show how to compute a permissive strategy scheme for a given window-stability objective.

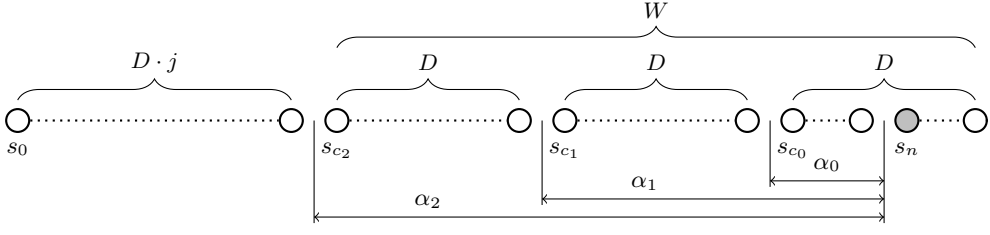
\begin{figure}
\begin{center}
	\begin{tikzpicture}[x=.6mm,y=.8mm,font=\small]
	\foreach \i in {0,60,70,110,120,160,170,185,210}{%
	   \node (s\i) at (\i,0) [ran] {};	
    }
    \node (s195) at (195,0) [ran,fill=lightgray] {};
	\foreach \i/\j in {0/60,70/110,120/160,170/185,195/210}{%
	   \draw [dotted,thick] (s\i) -- (s\j);
	}
	\draw [dotted,thick] (s210) -- +(1.5,0);
	\node [below] at (s0.south)  {$s_0$};
	\node [below] at (s195.south) {$s_n$};
	\node [below] at (s170.south) {$s_{c_0}$};
	\node [below] at (s120.south) {$s_{c_1}$};
	\node [below] at (s70.south) {$s_{c_2}$};
	\foreach \i/\j/\l in {0/60/$D\cdot j$,70/110/$D$,120/160/$D$,170/210/$D$}{%
	    \draw [decorate,decoration={brace,amplitude=8pt,raise=3pt}] (s\i.north) -- (s\j.north) node [midway,yshift=6mm] {\footnotesize \l};
	}
	\draw [decorate,decoration={brace,amplitude=8pt,raise=8mm}] (s70.north) -- (s210.north) node [midway,yshift=13mm] {\footnotesize $W$};
	\draw [] (165,0) -- +(0,-11);	
	\draw [] (115,0) -- +(0,-14);	
	\draw [] (65,0)  -- +(0,-17);
	\draw [] (190,0) -- +(0,-17);
	\draw [<->] (165,-10) -- node[above] {$\alpha_0$} (190,-10);
	\draw [<->] (115,-13) -- node[pos=0.3,above] {$\alpha_1$}(190,-13);
	\draw [<->] (65,-16) -- node[pos=0.2,above] {$\alpha_2$}(190,-16);
	\end{tikzpicture}
\end{center}
\caption{The information represented by the memory elements of $\Gamma$ (for $\ell =3$).}
\label{fig-selector-Gamma}
\end{figure}
\begin{theorem}
\label{thm-selector}
    Let $\game = (S, (S_\Box,S_\Diamond),E)$ be a finite-state game and $\Phi = (W,D,\varrho,\mu,\nu)$ a window-stability objective where $\dimr{} = k$. Then there is a strategy scheme $\Gamma$ with \mbox{$W \cdot (\maxr{} \cdot W)^{k\cdot (W/D)}$} memory elements computable in $\calO(|S|^2 \cdot |E| \cdot W^2 \cdot ( \maxr{}\cdot W)^{2k\cdot (W/D)})$ time which is permissive for~$\Phi$.
\end{theorem}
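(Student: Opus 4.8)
The plan is to observe that $\Phi$ is a \emph{safety} objective in disguise: a run fails $\Phi$ already after a finite prefix, namely as soon as some window closes with an out-of-bounds average. Accordingly, I would let the memory of $\Gamma$ carry exactly the information needed to detect such a closing, turn $\game$ together with this memory into a plain safety game, and let $\Gamma$ encode the \emph{most permissive} winning strategy of that game, in the spirit of \cite{DBLP:journals/ita/BernetJW02}. Concretely, after a prefix $s_0,\dots,s_n$ the memory element records (i) a position counter $p\in\{0,\dots,W{-}1\}$ (e.g.\ $n \bmod W$), which suffices to recognise checkpoints ($p\equiv 0 \bmod D$) and window completions, and to handle the initial transient during which fewer than $W/D$ windows are open; and (ii) for each of the at most $W/D$ currently open windows, the partial reward sum $\sum_{i=\ell D}^{n}\varrho(s_i) \in \{0,\dots,\maxr{}\cdot W\}^k$ accumulated since its checkpoint. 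Since there are at most $W/D$ such $k$-dimensional sums, each with entries bounded by $\maxr{}\cdot W$, the number of memory elements is $W\cdot(\maxr{}\cdot W)^{k\cdot(W/D)}$, as required. The update $\up$ is the obvious deterministic rule: on each move it advances $p$, adds the reward of the entered state to every still-open window, opens a fresh window whenever the new position is a checkpoint, and finalises and discards the oldest window once it has reached length $W$.

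Next I would set up the safety game and read off the scheme. Consider the arena on $S\times\mem$ whose transitions follow $\up$ with all moves of $\game$ still available, and call a configuration $\mathit{Bad}$ if its memory records a window that has just closed with total sum outside $[\mu W,\nu W]$ in some coordinate; note that an open window's partial sum never exceeds $\maxr{}\cdot W$, so nothing overflows before closing, and both the $\mu$- and the $\nu$-violation of a closing window are caught here. Let $A=\mathrm{Attr}_{\Diamond}(\mathit{Bad})$ be the set of configurations from which Player~$\Diamond$ can force reaching $\mathit{Bad}$, and let $\mathcal{W}=(S\times\mem)\setminus A$ be Player~$\Box$'s winning region for avoiding $\mathit{Bad}$. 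I would then define $\sel(s,m)$, for a $\Box$-configuration $(s,m)\in\mathcal{W}$, to be exactly those moves whose $\up$-successor stays in $\mathcal{W}$ (and $\emptyset$ elsewhere), and set $\init(s)$ to the initial memory of a run starting in $s$ whenever that configuration lies in $\mathcal{W}$, and ${\perp}$ otherwise. Because every $\Box$-configuration in $\mathcal{W}$ has a move back into $\mathcal{W}$, the nonemptiness requirement of Definition~\ref{def-delector} holds along $\reach(\init)$.

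Permissiveness then splits into the two standard directions. For soundness, $\mathcal{W}$ is closed under $\sel$-respecting $\Box$-moves and under all $\Diamond$-moves, so every play of $\game_\Gamma$ started in $\init\subseteq\mathcal{W}$ avoids $\mathit{Bad}$ forever; by the memory semantics this means every window closes in bounds, so the projected run satisfies $\Phi$, and hence each $\tau\in\Sigma_{\game_\Gamma}$ yields $\sigma[\tau,s]\in\Sigma^{\Phi}_\game(s)$. For completeness, take $\sigma\in\Sigma^{\Phi}_\game(s)$: no outcome of $\sigma$ reaches $\mathit{Bad}$, and if some $\sigma$-reachable configuration lay in $A$ then $\Diamond$ could force $\mathit{Bad}$ from it, contradicting that $\sigma$ achieves $\Phi$ against all opponents; hence all $\sigma$-reachable configurations lie in $\mathcal{W}$, every move of $\sigma$ is permitted by $\sel$, and $\sigma$ is admitted by $\Gamma$. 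Feeding these two directions into the passage recalled above relating an admitted $\sigma$ to its induced $\tau[\sigma,s]$ and relating $\tau$ to $\sigma[\tau,s]$ gives $\Sigma^{\Phi}_\game(s)\approx_s\Sigma_{\game_\Gamma}(s)$ for all $s$, i.e.\ $\Gamma$ is permissive for $\Phi$.

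The product has $|S|\cdot|\mem|$ configurations and at most $|E|\cdot|\mem|$ edges, so marking $\mathit{Bad}$ and computing the attractor fit comfortably within the stated $\calO(|S|^2\cdot|E|\cdot W^2\cdot(\maxr{}\cdot W)^{2k\cdot(W/D)})$ bound. The step I expect to require the most care is the memory bookkeeping: maintaining several \emph{overlapping} open windows simultaneously and proving that $\up$ is a faithful function of the history, so that $\mathit{Bad}$ coincides exactly with a window violation, including the transient phase where not all $W/D$ windows are yet active. Once that is pinned down, the safety/attractor argument for permissiveness is routine precisely because $\Phi$ is a genuine safety objective — which is exactly why a permissive scheme exists here, in contrast with mean-payoff.
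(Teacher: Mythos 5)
Your proposal is correct and follows essentially the same route as the paper: the memory you describe (a step counter modulo the window structure plus one bounded partial sum per open window) is the paper's $\mem = \{0,\ldots,D{-}1\}\times\{0,\ldots,\ell{-}1\}\times\calV^{\ell}$ up to a trivial re-encoding of the counter, and your safety-game/attractor complement $\mathcal{W}$ is exactly the paper's greatest fixed point $\gfix(\calF)$, with $\sel$, $\init$, and the two permissiveness directions argued the same way. No gaps worth noting.
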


\begin{proof}
Let $\ell = W/D$  and $\calV = \{0,\ldots,\maxr{} \cdot (W{-}1)\}^k$. We put  
\begin{itemize}
	\item $\mem = \{0,\ldots,D{-}1\} \times \{0,\ldots,\ell{-}1\} \times \calV^\ell$.
\end{itemize}

\noindent
Our aim is to construct $\Gamma$ so that for every run $s_0,s_1,\ldots$ in $\game$, the memory elements in the corresponding run $(s_0,m_0),(s_1,m_1),\ldots$ in $\game_\Gamma$, where $(s_0,m_0) \in \init$, satisfy the following. Let $n \in \Nset_0$, and let $m_n = (i,j,\alpha_0,\ldots,\alpha_{\ell-1})$. Then 
\begin{itemize}
	\item $i = n \mmod D$ is the number of steps since the last checkpoint, and $j = \min\{ \lfloor n/D \rfloor, \ell{-}1\}$ is a bounded counter which stores the number of checkpoint visited, up to $\ell-1$
	 (this information is important for the initial $W$ steps);
	\item for every $0 \leq r < \ell$, we put $c_r = n - r \cdot D - (n \mmod D)$ if $n - r \cdot D - (n \mmod D) \geq 0$, otherwise $c_r = n$.
	Intuitively, the state $s_{c_r}$ is the $r$-th previous checkpoint visited along $s_0,s_1,\ldots$ before visiting the state $s_n$ (see Figure~\ref{fig-selector-Gamma}). If the total number of checkpoints visited along the run up to $s_n$ (including $s_n$) is less than $r$, we put $c_r = n$. The vector $\alpha_r$ stored in $m_n$ is then equal to the total reward accumulated between $s_{c_r}$ and $s_n$ (not including~$s_n$), i.e., $\alpha_r = \sum_{t=c_r}^{n-1} \varrho(s_t)$ where the empty sum is equal to $\vec{0}$. In particular $m_0 = (0,0,\vec{0},\ldots,\vec{0})$.
\end{itemize} 

\noindent
Note that by Definition~\ref{def-delector}, we are obliged to define $\up(s,m)$ for \emph{all} pairs $(s,m) \in S \times \mem$, including those that will not be reachable in the end. Let `$\pl$' be a bounded addition over $\Nset_0$ defined by $a \pl b = \min\{a+b,\maxr{} \cdot (W{-}1)\}$. We extend `$\pl$' to $\calV$ in the natural (component-wise) way. The function $\up$ is constructed as follows (consistently with the above intuition): 
\begin{itemize}
\item For all $i,j \in  \Nset_0$ such that $0 \leq i \leq D-2$ and  $0 \leq j \leq \ell-1$, we put   $\up(s,(i,j,\alpha_0,\ldots,\alpha_{\ell-1})) = (i{+}1,j,\alpha_0 \pl \varrho(s),\ldots,\alpha_j \pl \varrho(s),\alpha_{j+1},\ldots,\alpha_{\ell-1})$.	
\item For all $j \in  \Nset_0$ such that $0 \leq j \leq \ell-2$, we put   $\up(s,(D{-}1,j,\alpha_0,\ldots,\alpha_{\ell-1})) = (0,j{+}1,\vec{0},\alpha_0 \pl \varrho(s),\ldots,\alpha_j \pl \varrho(s),\alpha_{j+1},\ldots,\alpha_{\ell-2})$.	
\item $\up(s,(D{-}1,\ell{-}1,\alpha_0,\ldots,\alpha_{\ell-1})) = (0,\ell{-1},\vec{0},\alpha_0 \pl \varrho(s),\ldots,\alpha_{\ell-2} \pl \varrho(s))$.	 
\end{itemize}

\noindent
For every $(s,m) \in S \times \mem$, let $\Succ(s,m)$ be the set of all $(s',m') \in S \times \mem$ such that  $(s,s') \in E$ and $\up(s,m) = m'$.
Now we define a function $\calF : 2^{S \times \mem} \rightarrow 2^{S \times \mem}$ such that, for a given $\Omega \subseteq S \times \mem$, the set $\calF(\Omega)$ consists of all $(s,(i,j,\alpha_0,\ldots,\alpha_{\ell-1}))$ satisfying the following conditions:
\begin{itemize}
	\item if $i = D{-}1$ and $j = \ell{-}1$, then $\mu\cdot W \leq \alpha_{\ell-1} + \varrho(s) \leq \nu\cdot W$.%
	\item if $s \in S_\Diamond$, then $\Succ(s,(i,j,\alpha_0,\ldots,\alpha_{\ell-1})) \subseteq \Omega$.
	\item if $s \in S_\Box$, then $\Succ(s,(i,j,\alpha_0,\ldots,\alpha_{\ell-1})) \cap \Omega \neq \emptyset$.
\end{itemize}

\noindent
Observe that $\calF$ is monotone. Let $\gfix(\calF)$ be the greatest fixed-point of $\calF$. For every $(s,m) \in S_\Box \times \mem$, we put $\sel(s,m) = \Succ(s,m) \cap \gfix(\calF)$. Further, the set $\init$ consists of all $ (s,(0,0,\vec{0},\ldots\vec{0})) \in \gfix(\calF)$. It follows directly from the definition of $\Gamma$ that $\sel(s,m) \neq \emptyset$ for all $(s,m) \in \reach(\init)$ such that $s \in S_\Box$.
{\tiny }

Since $\gfix(\calF)$ can be computed in $\calO(|S|^2 \cdot |E| \cdot W^2 \cdot (\maxr{} \cdot W)^{2k\cdot (W/D)})$ time by the standard iterative algorithm, the strategy scheme $\Gamma = (\mem,\up,\sel,\init)$ can also be computed in this time. Further, observe the following:
\begin{itemize}
  \item[(A)] Let $(s_0,m_0),(s_1,m_1),\ldots$ be a run in $\game_\Gamma$ such that $(s_0,m_0) \in \init$. Then $s_0,s_1,\ldots$ is a run in $\game$ that satisfies the window-stability objective $\Phi$.
  \item[(B)] Let $(s,m) \not\in \gfix(\calF)$, and let $\Gamma^*$ be a strategy scheme which is the same as $\Gamma$ except for its constrainer $\sel^*$ which is defined by $\sel^*(s,m) = \Succ(s,m)$ for all $(s,m) \in S_\Box \times \mem$. Then there is a strategy $\pi^* \in \Pi_{\game_{\Gamma^*}}$ such that for every strategy $\sigma^* \in \Sigma_{\game_{\Gamma^*}}$ we have that $\outcome{\sigma^*}{\pi^*}{(s,m)}$  visits a configuration $(t,(D-1,\ell-1,\alpha_0,\ldots,\alpha_{\ell-1}))$ where $\alpha_{\ell-1} + \varrho(t) < \mu\cdot W$ or $\alpha_{\ell-1} + \varrho(t) > \nu\cdot W$.
\end{itemize}

\noindent
Both (A) and (B) follow directly from the definition of $\calF$. Now we can easily prove that $\Gamma$ indeed encodes the window-stability objective $\Phi$, i.e., $\Sigma_{\game}^{\Phi}(s) \approx_s \Sigma_{\game_\Gamma}(s)$ for all $s \in S$. 

Let $\tau \in \Sigma_{\game_\Gamma}(s)$. We need to show that $\sigma[\tau,s]$ achieves the objective~$\Phi$ in~$s$. So, let $\pi \in \Pi_\game$, and let $s_0,s_1,\ldots$ be the run $\outcome{\sigma[\tau,s]}{\pi}{s}$. Obviously, there is a corresponding run $(s_0,m_0),(s_1,m_1),\ldots$ in $\game_\Gamma$ initiated in $(s,\init(s))$, which means that $s_0,s_1,\ldots$ satisfies $\obj$ by applying~(A). 

Now let $\sigma \in \Sigma_{\game}^{\Phi}(s)$. We need to show that $\sigma$ is admitted by $\Gamma$ in~$s$. Suppose it is not the case. If $\init(s) = {\perp}$, then $(s,(0,0,\vec{0},\ldots,\vec{0})) \not\in \gfix(\calF)$, and hence $\sigma \not\in \Sigma_{\game}^{\Phi}(s)$ by applying~(B). If $\init(s) \neq {\perp}$, there is a finite path $s_0,\ldots,s_n,s_{n+1}$ of \emph{minimal length} such that 
$s_0 = s$, $s_n \in S_\Box$, and the corresponding finite path $(s_0,m_0),\ldots,(s_n,m_n),(s_{n+1},m_{m+1})$ in $\game_{\Gamma^*}$, where $m_0 = \init(s)$ and $m_{i+1} = \up(s_i,m_i)$ for all $0\leq i \leq n$, satisfies that $s_{n+1} \not\in \sel(s_n,m_n)$. Note that for all $s_i \in S_\Box$ where $i<n$ we have that $s_{i+1} \in \sel(s_i,m_i)$, because otherwise we obtain a contradiction with the minimality of $s_0,\ldots,s_n,s_{n+1}$.  Since $(s_{n+1},m_{n+1}) \not\in \gfix(\calF)$, by applying~(B) we obtain a strategy $\pi^* \in \Pi_{\game_{\Gamma^*}}$ such that for every $\sigma^* \in \Sigma_{\game_{\Gamma^*}}$ we have that $\outcome{\sigma^*}{\pi^*}{(s_{n+1},m_{n+1})}$  visits a configuration $(t,(D-1,\ell-1,\alpha_0,\ldots,\alpha_{\ell-1}))$ where $\alpha_{\ell-1} + \varrho(t) < \mu\cdot W$ or $\alpha_{\ell-1} + \varrho(t) > \nu\cdot W$. Let $\pi \in \Pi_\game$ be a strategy satisfying the following conditions: 
\begin{itemize}
	\item $\outcome{\sigma}{\pi}{s}$ starts with $s_0,\ldots,s_{n+1}$. 
	\item For all finite paths of the form $s_0,\ldots,s_{n+1},\ldots ,s_t$ in $\game$ such that $s_t \in S_\Diamond$, let $(s_0,m_0),\ldots,(s_{n+1},m_{n+1}),\ldots,(s_t,m_t)$ be the unique corresponding finite path in $\game_{\Gamma^*}$. We put $\pi(s_0,\ldots,s_n,\ldots s_t) = s_{t+1}$, where $\pi^*((s_{n+1},m_{n+1}),\ldots,(s_t,m_t)) = (s_{t+1},m_{t+1})$.  
\end{itemize}

\noindent
Clearly, the run $\outcome{\sigma}{\pi}{s}$ does \emph{not} satisfy the objective~$\Phi$, which contradicts the assumption $\sigma \in \Sigma_{\game}^{\Phi}(s)$.
\end{proof}

For every window-stability multi-objective $\Delta = \obj_1 \wedge \cdots \wedge \obj_n$ where $\obj_i = (W_i,D_i,\varrho_i,\mu_i,\nu_i)$, we put $M_{\Delta} = \prod_{i=1}^n W_i \cdot (\maxr{i}\cdot W_i)^{k_{i} \cdot (W_i/D_i)}$, where $k_i = \dimr{i}$. As a direct corollary to Theorem~\ref{thm-selector} and Proposition~\ref{prop-sel-conj}, we obtain the following:

\begin{corollary}
\label{cor-window-permissive}
	Let $\game = (S, (S_\Box,S_\Diamond),E)$ be a finite-state game and $\Delta$ a window-stability multi-objective. Then there is a permissive strategy scheme for \mbox{$\Delta$} with $M_\Delta$ memory elements constructible in time $\calO(|S|^2 \cdot |E| \cdot M_{\Delta}^2)$. 
\end{corollary}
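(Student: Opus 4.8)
The plan is to read this off directly from Theorem~\ref{thm-selector} and Proposition~\ref{prop-sel-conj}, exactly as the corollary advertises; the only genuine work is the arithmetic bookkeeping of the two size and time bounds. Write $\Delta = \obj_1 \wedge \cdots \wedge \obj_n$ with $\obj_i = (W_i,D_i,\varrho_i,\mu_i,\nu_i)$ and $k_i = \dimr{i}$. First I would apply Theorem~\ref{thm-selector} separately to each single window-stability objective $\obj_i$, obtaining a strategy scheme $\Gamma_i = (\mem_i,\up_i,\sel_i,\init_i)$ permissive for $\obj_i$ with $|\mem_i| = W_i \cdot (\maxr{i} \cdot W_i)^{k_i \cdot (W_i/D_i)}$ memory elements, constructed in $\calO(|S|^2 \cdot |E| \cdot |\mem_i|^2)$ time. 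Note that $|\mem_i|$ is precisely the $i$-th factor appearing in the definition of $M_\Delta$.

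Next I would feed $\Gamma_1,\ldots,\Gamma_n$ into Proposition~\ref{prop-sel-conj}. Since each $\Gamma_i$ is permissive for $\obj_i$, the proposition yields a single strategy scheme for $\game$ permissive for $\obj_1 \wedge \cdots \wedge \obj_n = \Delta$, with $\prod_{i=1}^n |\mem_i|$ memory elements, built in $\calO(|S|^2 \cdot |E| \cdot \prod_{i=1}^n |\mem_i|^2)$ time. The memory count now telescopes exactly: $\prod_{i=1}^n |\mem_i| = \prod_{i=1}^n W_i \cdot (\maxr{i} \cdot W_i)^{k_i \cdot (W_i/D_i)} = M_\Delta$ by definition of $M_\Delta$, which is the claimed number of memory elements. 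Likewise $\prod_{i=1}^n |\mem_i|^2 = \bigl(\prod_{i=1}^n |\mem_i|\bigr)^2 = M_\Delta^2$, so the combination step runs in $\calO(|S|^2 \cdot |E| \cdot M_\Delta^2)$.

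Finally I would fold in the cost of producing the constituent schemes $\Gamma_i$. Each costs $\calO(|S|^2 \cdot |E| \cdot |\mem_i|^2)$, and since $M_\Delta = \prod_j |\mem_j|$ with every $|\mem_j| \geq 1$ forces $|\mem_i| \leq M_\Delta$, each such construction is bounded by the cost of the single combination step above. The combination step is therefore the bottleneck, and (treating the number $n$ of conjuncts as part of the fixed input objective) the whole construction stays within $\calO(|S|^2 \cdot |E| \cdot M_\Delta^2)$.

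The only point requiring care is this last time-complexity accounting: one must confirm that neither the $n$ separate applications of Theorem~\ref{thm-selector} nor the product-form blow-up in Proposition~\ref{prop-sel-conj} introduces a factor exceeding $M_\Delta^2$, which is exactly where the inequality $|\mem_i| \leq M_\Delta$ and the identity $\prod_i |\mem_i|^2 = M_\Delta^2$ are used. Everything else — correctness, namely that the combined scheme is genuinely permissive for $\Delta$ in the sense $\Sigma^{\Delta}_\game(s) \approx_s \Sigma_{\game_\Gamma}(s)$ — is delivered verbatim by the closure-under-conjunction content of Proposition~\ref{prop-sel-conj}, so no fresh argument about the $\approx_s$ relation is needed here.
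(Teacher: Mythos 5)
Your proposal is correct and matches the paper's own route exactly: the paper derives this corollary directly from Theorem~\ref{thm-selector} applied to each conjunct followed by Proposition~\ref{prop-sel-conj}, with the same bookkeeping that $\prod_i |\mem_i| = M_\Delta$ and that the combination step dominates the running time. No further comment is needed.
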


Now we can formulate a (meta)theorem about the solvability of objectives of the form $\Delta \wedge \psi$, where $\Delta$ is a window-stability multi-objective and $\psi$ is a reward-based objective such that the time complexity of solving $\Psi$ for a game $\game = (S, (S_\Box,S_\Diamond),E)$ and a reward function $\varrho$ can be asymptotically bounded by a function $f$ in $|S|$, $|E|$, $\maxr{}$, and $\dimr{}$. 

\begin{theorem}
\label{thm-meta-thm}
	Let $\Psi$ be a reward-based objective solvable in $\calO(f(|S|,|E|,\maxr{},\dimr{}))$ time for every finite-state game $\game = (S, (S_\Box,S_\Diamond),E)$ and every reward function $\varrho$ for~$\Psi$. Further, let $\Delta$ be a window-stability multi-objective. Then the objective $\Delta \wedge \Psi$ is solvable in time 
	\[ 
	  \calO(\max\{f(|S|\cdot M_\Delta, |E| \cdot M_\Delta, \maxr{}, \dimr{}), |S|^2 \cdot |E| \cdot M_\Delta^2\})
	\] for every finite-state game $\game = (S, (S_\Box,S_\Diamond),E)$ and every reward function $\varrho$ for~$\Psi$.
\end{theorem}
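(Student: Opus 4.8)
The plan is to reduce the problem of solving $\Delta \wedge \Psi$ on $\game$ to the problem of solving $\Psi$ alone on the product game $\game_\Gamma$, where $\Gamma$ is the permissive strategy scheme for $\Delta$ supplied by Corollary~\ref{cor-window-permissive}. Concretely, I would first invoke Corollary~\ref{cor-window-permissive} to construct, in time $\calO(|S|^2 \cdot |E| \cdot M_\Delta^2)$, a permissive strategy scheme $\Gamma = (\mem,\up,\sel,\init)$ for $\Delta$ with $|\mem| = M_\Delta$, and then build the induced game $\game_\Gamma$ together with the lifted reward function $\varrho'(s,m) = \varrho(s)$. The decision algorithm for $\Delta \wedge \Psi$ is then: construct $\Gamma$ and $\game_\Gamma$, and run the assumed $\Psi$-solver on $\game_\Gamma$ with reward $\varrho'$ from the initial configuration $(s,\init(s))$ (answering negatively outright when $\init(s) = {\perp}$).

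Next I would carry out the size and complexity bookkeeping for $\game_\Gamma$. Its state set $S \times \mem$ has $|S| \cdot M_\Delta$ elements. Because the memory update $\up$ is a deterministic function, every edge of $\game_\Gamma$ is determined by an edge of $\game$ together with a memory element, so $\game_\Gamma$ has at most $|E| \cdot M_\Delta$ edges and is constructible within time proportional to its size, which is dominated by the cost of building $\Gamma$. The key point for the stated bound is that the lifted reward $\varrho'$ has exactly the same dimension $\dimr{}$ and maximal entry $\maxr{}$ as $\varrho$, so neither parameter is affected by the blow-up factor $M_\Delta$; hence the $\Psi$-solver runs in $\calO(f(|S| \cdot M_\Delta, |E| \cdot M_\Delta, \maxr{}, \dimr{}))$ time, and the grand total is the maximum of this and the construction cost $\calO(|S|^2 \cdot |E| \cdot M_\Delta^2)$, matching the claim.

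For correctness I would appeal to Proposition~\ref{prop-conj}: since $\Gamma$ is permissive for $\Delta$, we have $\Sigma^{\Delta \wedge \Psi}_{\game}(s) \approx_s \Sigma^{\Psi}_{\game_\Gamma}(s)$ for every $s \in S$. The relation $\approx_s$ pairs nonempty sets with nonempty sets (and empty with empty): its first clause sends any $\sigma$ on the left to a witness $\tau$ on the right, and its second clause conversely sends any $\tau$ on the right to a $\sigma$ on the left. Consequently $\Sigma^{\Delta \wedge \Psi}_{\game}(s) = \emptyset$ iff $\Sigma^{\Psi}_{\game_\Gamma}(s) = \emptyset$ (with both sides empty when $\init(s) = {\perp}$), so the emptiness decision is correct. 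Moreover, when the $\Psi$-solver returns a witness $\tau$, the second clause of $\approx_s$ guarantees that $\sigma[\tau,s] \in \Sigma^{\Delta \wedge \Psi}_{\game}(s)$, and since $\tau$ has a finite description and $\Gamma$ is finite-memory, $\sigma[\tau,s]$ admits a finite description as well, giving the required witness output.

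The only genuinely non-formulaic step is verifying that the objective $\Psi$ transported to $\game_\Gamma$ coincides with the original one, which is what allows Proposition~\ref{prop-conj} to be applied with the same $\Psi$ on both games. Here I would use that $\Psi$ is \emph{reward-based}: a run of $\game_\Gamma$ and its projection to $\game$ have identical reward sequences under the lifting $\varrho'$, so one satisfies $\Psi$ exactly when the other does. I expect this reward-preservation argument, together with the deterministic-update edge count for $\game_\Gamma$, to be the main (though still light) obstacle; once they are in place, the result follows by direct substitution into the complexity bounds of Corollary~\ref{cor-window-permissive} and the hypothesis on $\Psi$.
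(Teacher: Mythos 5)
Your proposal is correct and follows exactly the route the paper intends: the paper's proof is the single remark that the theorem "is a simple consequence of Corollary~\ref{cor-window-permissive} and Proposition~\ref{prop-conj}", and you have simply filled in the product construction, the size bookkeeping, and the reward-based transport of $\Psi$ that this remark leaves implicit. No discrepancies.
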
 

\noindent
Note that Theorem~\ref{thm-meta-thm} is a simple consequence of Corollary~\ref{cor-window-permissive} and Proposition~\ref{prop-conj}. 

Since mean-payoff objectives are solvable in $\calO(|S|\cdot |E| \cdot \maxr{})$ time when $\dimr{} =1$ \cite{Brim2010} and in  $\calO(|S|^2 \cdot |E| \cdot \maxr{} \cdot k \cdot (k \cdot |S| \cdot \maxr{})^{k^2+2k+1})$ time when $\dimr{} = k \geq 2$ \cite{Chatterjee:2013:HST:2529453.2529498}, we finally obtain:

\begin{theorem}
	Let $\game = (S, (S_\Box,S_\Diamond),E)$ be a finite-state game, $\Delta$ a window-stability multi-objective, and $\Psi = (\varrho,b)$ a mean-payoff objective. If $\dimr{} = 1$, then the objective $\Delta \wedge \Psi$ is solvable in time $\calO(|S|^2\cdot |E| \cdot M_\Delta^2 \cdot \maxr{})$. If $\dimr{} = k \geq 2$, then the objective $\Delta \wedge \Psi$ is solvable in time $\calO(|S|^2 \cdot |E| \cdot M_\Delta^3 \cdot \maxr{} \cdot k \cdot (k \cdot |S| \cdot M_\Delta \cdot \maxr{})^{k^2+2k+1})$.
\end{theorem}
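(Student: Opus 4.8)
The plan is to obtain this statement as an immediate instantiation of the meta-theorem (Theorem~\ref{thm-meta-thm}), which already reduces solving $\Delta \wedge \Psi$ on $\game$ to building the permissive strategy scheme $\Gamma$ from Corollary~\ref{cor-window-permissive} and then solving $\Psi$ on the product $\game_\Gamma$. The only genuine content is to confirm that a mean-payoff objective $\Psi = (\varrho,b)$ fits the hypotheses of Theorem~\ref{thm-meta-thm}, and then to plug the two known mean-payoff solving times into the complexity formula and simplify.

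First I would check the hypotheses. A mean-payoff objective is reward-based, since whether $\lr_\varrho(\run) \geq b$ holds depends only on the sequence $\varrho(s_0),\varrho(s_1),\dots$ of rewards. When this objective is lifted to $\game_\Gamma$ via the reward function $\varrho'(s,m) = \varrho(s)$, the reward sequence along any run of $\game_\Gamma$ coincides with the one along its projection to $\game$, so $\lr$ is preserved and, crucially, neither $\maxr{}$ nor $\dimr{}$ changes. Hence Theorem~\ref{thm-meta-thm} applies, and since $\game_\Gamma$ has $|S|\cdot M_\Delta$ states and at most $|E|\cdot M_\Delta$ edges, $\Delta \wedge \Psi$ is solvable in time $\calO(\max\{f(|S| M_\Delta, |E| M_\Delta, \maxr{}, \dimr{}),\ |S|^2 |E| M_\Delta^2\})$ once $f$ is taken to be the mean-payoff solving time.

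For $\dimr{} = 1$ I would take $f(|S|,|E|,\maxr{},1) = |S|\cdot |E| \cdot \maxr{}$ from \cite{Brim2010}. Substituting the product parameters gives $f(|S| M_\Delta, |E| M_\Delta, \maxr{}, 1) = |S| \cdot |E| \cdot M_\Delta^2 \cdot \maxr{}$, and this, together with the scheme-construction term $|S|^2 |E| M_\Delta^2$ from Corollary~\ref{cor-window-permissive}, is bounded by $|S|^2 |E| M_\Delta^2 \maxr{}$ (using $|S| \geq 1$ and $\maxr{} \geq 1$), which is the stated clean bound. For $\dimr{} = k \geq 2$ I would take $f(|S|,|E|,\maxr{},k) = |S|^2 |E| \maxr{}\,k\,(k|S|\maxr{})^{k^2+2k+1}$ from \cite{Chatterjee:2013:HST:2529453.2529498}. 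Substituting $|S| \mapsto |S| M_\Delta$ and $|E| \mapsto |E| M_\Delta$ and collecting the factors of $M_\Delta$ (two from the $|S|^2$ factor and one from $|E|$, giving $M_\Delta^3$ outside, plus one $M_\Delta$ folded into the parenthesised base) reproduces exactly $|S|^2 |E| M_\Delta^3 \maxr{}\,k\,(k|S| M_\Delta \maxr{})^{k^2+2k+1}$; as this dominates $|S|^2 |E| M_\Delta^2$, the maximum equals this expression.

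The argument is essentially bookkeeping, so I expect no conceptual obstacle; the one place to be careful is the arithmetic of the substitution in the $k \geq 2$ case, namely tracking that the three $M_\Delta$ contributed by lifting $|S|$ and $|E|$ combine to $M_\Delta^3$ outside the large exponentiated factor while a single $M_\Delta$ lands inside its base, and confirming that the substituted mean-payoff term indeed dominates the $|S|^2 |E| M_\Delta^2$ cost of constructing $\Gamma$ so that the $\max$ collapses to the claimed bound.
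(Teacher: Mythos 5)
Your proposal is correct and matches the paper's own (implicit) argument exactly: the paper derives this theorem by instantiating Theorem~\ref{thm-meta-thm} with the cited mean-payoff solving times from \cite{Brim2010} and \cite{Chatterjee:2013:HST:2529453.2529498}, and your substitution arithmetic for both the $\dimr{}=1$ and $\dimr{}=k\geq 2$ cases is accurate.
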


\noindent
Let us note that for a given window-stability multi-objective $\Delta$ and a given one-dimensional reward function $\varrho$, there exists the \emph{maximal} bound $b$ such that the objective $\Delta \wedge (\varrho,b)$ is achievable. Further, this bound $b$ is rational and computable in time $\calO(|S|^2\cdot |E| \cdot M_\Delta^2 \cdot \maxr{})$.

\subsection{Lower Bounds for Window-Stability Objectives}
\label{sec-hardness}

We now focus on proving lower bounds for solving the window-stability objectives. More precisely, we establish lower complexity bounds for the problem whose instances are triples of the form $(\game,s,\obj)$, where $\game$ is a game (or a graph), $s$ is a state of $\game$, $\obj= (W,D,\varrho,\mu,\nu)$ is a window-stability objective, and the question is whether there exists a strategy $\sigma \in \Sigma$ which achieves $\obj$ in~$s$. The components of $\obj$ can be encoded in unary or binary, which is explicitly stated when presenting a given lower bound.

The main result of this section is Theorem~\ref{thm:hardness-unary} which implies that solving a window-stability objective is $\PSPACE$-hard for games and $\NP$-hard for graphs even if $\dimr{} = 1$, $D=1$, and $W$ as well as the values 
$\varrho(s)$ for all $s \in S$ are encoded in~\emph{unary}. Note that an upper time complexity bound for solving these objectives is $\calO(|S|^2 \cdot |E| \cdot W \cdot (\maxr{} \cdot W)^{W/D})$ by Corollary~\ref{cor-window-permissive}. Hence, the parameter which makes the problem hard is $W/D$. 

As a warm-up, we first show that lower bounds for solving the window-stability objectives where the reward function is of higher dimension, or $W$, $D$, and the rewards are encoded in binary, follow rather straightforwardly from the literature. Then, we develop some new insights and use them to prove the main result.

\begin{theorem}
Solving the window-stability objectives (where $\dimr{}$ is not restricted) is $\EXPTIME$-hard. The hardness result holds even if 
\begin{enumerate}
\item
the problem is restricted to instances where each component of each reward vector is in $\{-1,0,1\}$, or
\item
the problem is restricted to instances where the reward vectors have dimension one (but the rewards are arbitrary binary-encoded numbers).
\end{enumerate}
\end{theorem}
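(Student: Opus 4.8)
The plan is to derive both hardness claims from the \EXPTIME-completeness of the multi-dimensional fixed-window mean-payoff game problem of~\cite{raskin-windows}. That problem is, up to the precise window semantics, already a window-stability problem: one is given a game, a window length, and a per-coordinate lower threshold, and asks whether Player~$\Box$ can force every window to accumulate enough reward. For item~(1) I would therefore give a polynomial-time reduction from this problem, taking the checkpoint distance $D=1$ so that a window starts at every position, reading the per-coordinate threshold into the lower bound $\mu$, and setting $\nu$ componentwise to a value that can never be exceeded inside a single window (so the upper bound is vacuous and only the threshold is active). Since the source uses rewards in $\{-1,0,1\}$, the constructed reward function does too; if one insists on the convention $\varrho : S \to \Nset_0^k$, a harmless shift by $1$ in every coordinate (compensated by shifting $\mu,\nu$ by $1$, which is exactly the shift induced on a window average) makes the rewards lie in $\{0,1,2\}$. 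The only delicate point here is a standard adaptation bridging the ``closes within the window'' semantics of~\cite{raskin-windows} to our ``the whole window of length exactly $W$ is good'' semantics; this yields item~(1) with $\dimr{}$ equal to the number of coordinates of the source instance.

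For item~(2) the idea is to collapse the $k$ coordinates of a window-stability instance into a single coordinate by a positional number system. I would pick a base $B$ strictly larger than the largest reward a single window can accumulate in one coordinate, i.e.\ $B > W\cdot\maxr{}$, and replace the vector reward $\varrho(s)=(r_1,\dots,r_k)$ by the scalar reward $\hat\varrho(s)=\sum_{i=1}^{k} r_i\,B^{\,i-1}$. Because $B$ exceeds every per-coordinate window sum, no carries occur between ``digits'', so the scalar reward accumulated over any length-$W$ window equals $\sum_{i=1}^{k} S_i\,B^{\,i-1}$, where $S_i$ is the sum accumulated in coordinate~$i$; by uniqueness of base-$B$ representations this single scalar determines the whole vector $(S_1,\dots,S_k)$. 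The weights $B^{\,i-1}$ are exponential in $k$ but have polynomially many bits, so with $W$, $\mu$, $\nu$ and the rewards written in binary the reduction stays polynomial, and the resulting one-dimensional instance is equivalent to the $k$-dimensional one. This exploits the two-sided (interval) nature of our objective, which is precisely the feature absent from~\cite{raskin-windows}, where the one-dimensional problem is in~$\PTIME$.

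The main obstacle is that a single scalar interval $[\mu,\nu]$ can faithfully represent a conjunction of per-coordinate constraints only when each coordinate constraint is an \emph{equality}: the image of a box $\prod_i[\mu_i,\nu_i]$ under the positional encoding is a contiguous scalar interval exactly when all but the most significant coordinate are unconstrained, so genuine one-sided thresholds in several coordinates do \emph{not} collapse into one scalar bound. Consequently item~(2) cannot simply reuse the one-sided instance of item~(1); the collapse is faithful only for \emph{exact} window constraints ($\mu_i=\nu_i$ in every coordinate), where the scalar window sum is pinned to a single value $V=\sum_i \mu_i B^{\,i-1}$, expressed by the degenerate interval $\mu=\nu=V/W$. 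I would therefore feed the collapse with an instance using exact window sums, reducing instead from an \EXPTIME-complete problem whose acceptance is naturally characterised by a prescribed exact value of a bounded counter/register (so that ``window sum equals target'' encodes ``register hits its target''). Establishing that this exact-constraint variant is itself \EXPTIME-hard, and that the positional encoding together with the semantic bridge of the first step is correctness-preserving, is the technically substantial part; the remaining size and arithmetic bookkeeping is routine.
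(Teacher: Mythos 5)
Your plan has the right ingredients but two genuine gaps, one per item.

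For item~(1), you propose a black-box polynomial-time reduction \emph{from} the multi-dimensional fixed-window mean-payoff game problem, with $D=1$ and a vacuous $\nu$. This does not go through: the two objectives are semantically incomparable in the direction you need. The fixed-window objective of \cite{raskin-windows} asks that from every position there is \emph{some} window of length \emph{at most} $W$ whose average meets the threshold, whereas the window-stability objective with $D=1$ asks that the average over the window of length \emph{exactly} $W$ starting at every position is at least $\mu$. A strategy winning for the former may let a window ``close'' after $3$ steps and then accumulate negative reward for the remaining $W-3$ steps, violating the latter on the very same game; so the identity map on games is not a reduction, and it is not clear how to repair it generically. What you call a ``standard adaptation bridging the semantics'' is in fact the whole difficulty, and the paper sidesteps it by not reducing from the fixed-window \emph{problem} at all: it re-runs the reduction from the \emph{source} of that problem's hardness, namely acceptance for polynomial-space alternating Turing machines (Lemma~23 of \cite{raskin-windows}), checking that essentially the same construction is correct under the exact-length-$W$, two-sided semantics.

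For item~(2), your positional-encoding collapse of $k$ dimensions into one is an idea the paper does not need, and as you yourself observe it is only faithful for per-coordinate \emph{equality} constraints, so it cannot be fed the instances from item~(1). You are then left needing an $\EXPTIME$-complete problem ``whose acceptance is characterised by a prescribed exact value of a bounded counter'' --- but identifying such a problem and establishing the hardness of the resulting exact-window-sum instances is precisely the content of item~(2), not bookkeeping. The paper's answer is \emph{countdown games} \cite{JSL:timed-automata-countdown-games}: Lemma~24 of \cite{raskin-windows} reduces them to a two-dimensional fixed-window problem (using rewards $r$ and $-r$ with two lower bounds to simulate equality), and the paper observes that since the window-stability objective already imposes both a lower and an upper bound on the local mean payoff, a single dimension with $\mu=\nu$ expresses ``window sum equals target'' directly, with no positional encoding required. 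Your key insight --- that the two-sided bound is what buys dimension one --- is exactly the paper's, but the reduction itself (from countdown games, with binary-encoded $W$ and rewards) is the part your proposal defers.
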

\begin{proof}
The result can be proven by a straightforward adaptation of the proof of $\EXPTIME$-hardness of multi-dimensional fixed-window mean-payoff problem~\cite[Lemma~23 and~24]{raskin-windows}. The reductions in~\cite{raskin-windows} that we can mimic are from the acceptance problem for polynomial-space alternating Turing machines (item 1.) and \emph{countdown games}~\cite{JSL:timed-automata-countdown-games} (item 2.).
Although the fixed-window mean-payoff problem differs from ours (see Section~\ref{sec-intro}), an examination of the proofs in~\cite{raskin-windows} reveals that almost the same constructions work even in our setting. In particular, while the problem to which countdown games are reduced in~\cite{raskin-windows} assumes two-dimensional rewards, in our setting we can restrict to single dimension due to window-stability objective imposing both a lower and an upper bound on local mean payoff.
\end{proof}

The reductions in the previous theorem require that the window size $W$ is encoded in binary, as the windows need to be exponentially long in the size of the constructed graph. For the case when $W$ is given in unary encoding, the following result can be adapted from~\cite{raskin-windows}.

\begin{theorem}
\label{thm-hard-one}
Solving the window-stability objectives (where $\dimr{}$ is not restricted) where the window size $W$ is encoded in unary is $\PSPACE$-hard, even if it is restricted to instances where the components of reward functions are in $\{-1,0,1\}$.
\end{theorem}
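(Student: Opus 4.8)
The plan is to reduce from the acceptance problem for alternating Turing machines running in \emph{polynomial time}, which is $\PSPACE$-complete since $\mathrm{APTIME} = \PSPACE$. This is precisely item~1 of the previous theorem, where the reduction started from polynomial-\emph{space} alternating machines and obtained $\EXPTIME$-hardness via $\mathrm{APSPACE} = \EXPTIME$; the only change is to bound the running time rather than the work tape. A polynomial-space machine may run for exponentially many steps, which is exactly why the windows there had to be exponentially long and $W$ had to be encoded in binary. Under a polynomial time bound, every branch of the computation tree has polynomial length, so the windows---and hence $W$---shrink to polynomial size and can be written in \emph{unary}.

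Concretely, given an alternating machine $M$ with time bound $p(n)$ and an input $x$ with $|x| = n$, I would reuse the game of~\cite{raskin-windows} essentially verbatim: its states encode (fragments of) configurations of $M$, its moves simulate single transition steps, existential configurations are owned by Player~$\Box$ and universal configurations by Player~$\Diamond$, so that strategies of Player~$\Box$ correspond to the existential choices in the computation tree and strategies of Player~$\Diamond$ to the universal ones. I keep the same reward function $\varrho$ with components in $\{-1,0,1\}$, whose role is to make an inconsistent simulated transition accumulate a reward that drives the local mean payoff outside the admissible band. The window-stability objective is $\obj = (W,D,\varrho,\mu,\nu)$ with $D = W$, so that the length-exactly-$W$ windows do not overlap and the run splits into disjoint blocks (matching the fixed-window semantics of~\cite{raskin-windows} after padding each branch to length $p(n)$), with $W = \Theta(p(n))$ up to the per-step encoding width and with $\mu,\nu$ chosen so that $\mu \le \lmp_{W,D,\varrho,\ell}(\run) \le \nu$ holds for all checkpoints $\ell$ iff each length-$W$ block of $\run$ spells a faithful \emph{accepting} computation branch of $M$. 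The dimension of $\varrho$ grows with the configuration size and is therefore polynomial, which is permitted since $\dimr{}$ is unrestricted here; note also that the two-sided bound $\mu \le \cdot \le \nu$ subsumes the one-sided threshold of fixed-window mean-payoff, so nothing is lost in the translation.

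For correctness one argues, as in~\cite{raskin-windows}, that $M$ accepts $x$ iff $\Sigma^{\obj}(s) \neq \emptyset$: a winning strategy of Player~$\Box$ resolves the existential moves so that against \emph{every} adversary branch the generated run is a faithful accepting simulation whose window averages lie in $[\mu,\nu]$, and conversely any deviation from such a simulation forces some window average out of $[\mu,\nu]$. The only genuine obstacle, and the single place where the argument departs from the $\EXPTIME$ proof, is the \emph{length accounting}: I must check that the distance over which the consistency checks of $\varrho$ have to be resolved is bounded by the running time $p(n)$ rather than by the (possibly exponential) number of distinct configurations, so that one full branch together with all its pending checks fits inside a window of width polynomial in $n$. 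Because $M$ halts within $p(n)$ steps this holds, $W$ stays polynomial, and the constructed instance is genuinely unary; since the source problem is $\PSPACE$-complete, $\PSPACE$-hardness follows.
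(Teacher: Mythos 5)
Your reduction is correct in outline, but it takes a genuinely different route from the paper. The paper obtains this theorem by adapting the reduction of~\cite[Lemma~25]{raskin-windows} from \emph{generalized reachability games} (Player~$\Box$ must visit each of $k$ target sets; one reward dimension per target, window length polynomial in the game), whereas you adapt the alternating-Turing-machine reduction of Lemma~23 and exploit $\mathrm{APTIME}=\PSPACE$ in place of $\mathrm{APSPACE}=\EXPTIME$. Both source problems are $\PSPACE$-hard and both yield polynomially many $\{-1,0,1\}$-valued dimensions with a unary-encodable $W$, so the choice is legitimate; your key observation --- that bounding \emph{time} rather than \emph{space} collapses the window length from exponential to polynomial because every debt incurred by a consistency gadget is repaid within the running time $p(n)$ --- is exactly the right length accounting and is what makes the scale-down work. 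The trade-off is that the generalized-reachability route is already worked out in~\cite{raskin-windows} for polynomial windows and needs essentially no re-engineering of gadgets, while your route inherits all the adaptation burden of the $\EXPTIME$ proof: in particular, the fixed-window semantics of~\cite{raskin-windows} opens a window at \emph{every} position, whereas your choice $D=W$ checks only the non-overlapping blocks $[kW,(k{+}1)W)$, so you must additionally argue that a cheat occurring mid-block still drives the \emph{whole block's} average outside $[\mu,\nu]$ (rather than merely the suffix average), and that every $-1/+1$ pair of a consistency check falls inside a single block after padding. These are resolvable details, comparable in rigor to what the paper itself leaves implicit, but they are the places where your construction could silently break and should be checked explicitly.
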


\noindent
A proof of Theorem~\ref{thm-hard-one} is obtained by adapting a proof from~\cite[Lemma 25]{raskin-windows}, where a reduction from generalized reachability games is given.

The results of~\cite{raskin-windows} do not yield lower bounds for window-stability objectives with one-dimensional reward functions in which either the windows size or the rewards are %
encoded in unary. 
In our setting, for the case of binary rewards/unary window size one can come up with $\NP$-hardness for graphs and $\PSPACE$-hardness for games via reductions from the $\mathsf{Subset}$-$\mathsf{Sum}$ problem and its quantified variant~\cite{Travers:QSSum}, respectively. Similarly, for unary rewards/binary window size a $\PSPACE$-hardness for games via reduction from emptiness of 1-letter alternating finite automata~\cite{JS:1-L-AFA} seems plausible. We do not follow these directions, since we are able to prove even stronger and somewhat surprising result: solving window-stability objectives with one-dimensional reward functions is $\PSPACE$-hard for games and $\NP$-hard for graphs even if \emph{all} the numbers in the input instance are encoded in unary. The proof of this result requires a new proof technique sketched below.

We rely on reductions from special variants of the $\mathsf{SAT}$ and $\mathsf{QBF}$ problems. An instance of the $\mathsf{Balanced}$-$\mathsf{3}$-$\mathsf{SAT}$ problem is a propositional formula $\varphi$ in a 3-conjunctive normal form which contains an even number of variables. Such an instance is positive if and only if $\varphi$ admits a satisfying assignment which maps exactly half of $\varphi$'s variables to $1$ (\textit{true}). We can also define a quantified variant, a $\mathsf{Balanced}$-$\mathsf{QBF}$ problem: viewing a quantified Boolean formula $\psi = \exists x_1 \forall x_2 \cdots \exists x_{n-1} \forall x_n\, \varphi$ (where $\varphi$ is quantifier-free), as a game between player controlling existentially quantified variables (who strives to satisfy $\varphi$) and player controlling universal variables (who aims for the opposite), we ask whether the existential player can enforce assignment mapping exactly half of the variables to $1$ and satisfying $\varphi$ (a formal definition of $\mathsf{Balanced}$-$\mathsf{QBF}$ is given in Appendix~\ref{app:hardness}). The following lemma is easy.

\begin{lemma}
\label{lem:balanced-hardness}
The $\mathsf{Balanced}$-$\mathsf{QBF}$ problem is $\PSPACE$-complete. The $\mathsf{Balanced}$-$\mathsf{3}$-$\mathsf{SAT}$ is $\NP$-complete.
\end{lemma}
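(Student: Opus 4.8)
The plan is to prove each part by a reduction from the corresponding classical complete problem, and for the upper bounds simply observe that both problems lie in the standard class. For membership, $\mathsf{Balanced}$-$\mathsf{3}$-$\mathsf{SAT}$ is clearly in $\NP$: a witness is a satisfying assignment, and one checks in polynomial time both that it satisfies $\varphi$ and that exactly $n/2$ of the $n$ variables are set to $1$. For $\mathsf{Balanced}$-$\mathsf{QBF}$, membership in $\PSPACE$ follows because evaluating the two-player game is a straightforward recursive alternating procedure: the existential and universal players alternately fix the truth values $x_1,\ldots,x_n$, and at the leaves one checks in polynomial time whether $\varphi$ holds \emph{and} whether the chosen assignment is balanced. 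This recursion uses polynomial depth and polynomial space per level, so the whole evaluation runs in $\PSPACE$.

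For hardness I would reduce from ordinary $\mathsf{3}$-$\mathsf{SAT}$ (resp.\ $\mathsf{QBF}$). The key trick is a \emph{padding} argument that forces the balance condition to become vacuous. Given a $\mathsf{3}$-$\mathsf{SAT}$ instance $\varphi$ over variables $x_1,\ldots,x_n$, I introduce $n$ fresh variables $y_1,\ldots,y_n$ together with clauses enforcing $y_i \leftrightarrow \neg x_i$ for each $i$ (each such biconditional is expressible as a small conjunction of clauses which can be turned into 3-clauses by the usual encoding). The padded formula $\varphi'$ has $2n$ variables, so the balance requirement asks for exactly $n$ of the $2n$ variables to be true. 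Since any assignment satisfying the coupling clauses sets exactly one of each pair $\{x_i,y_i\}$ to $1$, \emph{every} satisfying assignment of $\varphi'$ is automatically balanced. Hence $\varphi$ is satisfiable iff $\varphi'$ is a positive $\mathsf{Balanced}$-$\mathsf{3}$-$\mathsf{SAT}$ instance, giving $\NP$-hardness.

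For the $\mathsf{QBF}$ side, the same coupling idea is used but the quantifier structure must be respected. For each original variable $x_i$ I pair it with a fresh companion $y_i$ and let $y_i$ be quantified by the player \emph{opposite} to $x_i$'s quantifier; alternatively, and more cleanly, I attach $y_i$ as an existential variable that the satisfying side controls immediately after $x_i$ is set, so that it can always play $y_i = \neg x_i$. The coupling clauses $y_i \leftrightarrow \neg x_i$ are added to the matrix, forcing exactly one of each pair true along any play consistent with them. Thus the balance condition is again satisfied automatically by any play that the existential player wins in the coupled game, and the existential player can win the balanced game iff it can win the original $\mathsf{QBF}$ game, yielding $\PSPACE$-hardness.

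The main obstacle is making the quantifier interleaving precise in the $\mathsf{QBF}$ reduction: one must verify that giving the extra variables $y_i$ to the correct player (and placing them in the correct position in the quantifier prefix) neither grants the universal player new power to violate balance nor forces the existential player into unbalanced assignments. The cleanest way is to interleave each companion immediately after its original variable and assign it to the satisfying (existential) player, so that the coupling clauses are a \emph{hard} constraint the existential player can always meet, making balance a derived consequence rather than an independent demand. The $\mathsf{3}$-$\mathsf{SAT}$ case has no quantifier structure and is therefore routine once the coupling gadget is fixed. I would carry the argument out in the order: (i) membership in $\NP$ and $\PSPACE$; (ii) the coupling gadget and the $\mathsf{3}$-$\mathsf{SAT}$ reduction; (iii) the quantified version, checking that the balance condition holds along every consistent play.
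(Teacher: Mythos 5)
Your proof is correct, but your hardness reduction is genuinely different from the one in the paper. The paper also doubles the number of variables from $n$ to $2n$, but it appends $n$ \emph{unconstrained} existentially quantified variables at the end of the prefix (constrained only by tautological clauses $x_i \vee \neg x_i$); the existential player then achieves balance \emph{adaptively}, by counting the number $o$ of variables already set to true along the current play and setting exactly $n-o$ of the trailing padding variables to true. Your reduction instead couples each original variable $x_i$ with a companion $y_i$ via clauses enforcing $y_i \leftrightarrow \neg x_i$, so that balance is a \emph{syntactic consequence} of satisfying the matrix rather than a goal the existential player must separately engineer; this makes the balance verification entirely local and removes the need to reason about the count of true variables along root-to-leaf paths, at the cost of having to argue about the placement of the companions in the quantifier prefix (which you handle correctly by interleaving each $y_i$ as an existential immediately after $x_i$, so the universal player gains no power and the existential player can always comply with the coupling). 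One small point to watch in both approaches: the coupling clauses (in your version) and the tautological clauses (in the paper's) have only two literals, so strictly speaking one must either adopt the convention that 3-CNF permits clauses with at most three literals (or repeated literals), or pad in a way that does not disturb the balance arithmetic; since the paper's own padding clauses have the same feature, this is a shared convention rather than a gap in your argument.
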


Let $\game$ be a finite-state game and  $\obj = (W,D,\varrho,\mu,\nu)$ a window-stability objective. An instance $(\game,s,\obj)$ is \emph{small} if $\dimr{}=1$, and $W$, $D$, $\maxr{}$, and the numerators and denominators of the fully reduced forms of $\mu$ and $\nu$, are bounded by the number of states of~$\game$.

\begin{theorem}
\label{thm:hardness-unary}
Solving the window-stability objectives with one-dimensional reward functions is $\PSPACE$-hard for games and \mbox{$\NP$-hard} for graphs, even if it is restricted to small instances.
\end{theorem}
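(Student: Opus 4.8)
The plan is to give polynomial-time many-one reductions from the $\mathsf{Balanced}$-$\mathsf{QBF}$ problem (for the game case) and from the $\mathsf{Balanced}$-$\mathsf{3}$-$\mathsf{SAT}$ problem (for the graph case), both of which are hard by Lemma~\ref{lem:balanced-hardness}. Given an instance with variables $x_1,\dots,x_n$ (with $n$ even) and clauses $C_1,\dots,C_m$, I would build an arena in which a run corresponds to an infinite repetition of an \emph{assignment phase}: Player~$\Box$ successively passes through a gadget for each variable and selects a truth value, and in the game case the gadgets for universally quantified variables are instead controlled by Player~$\Diamond$, so that the alternation in the arena mirrors the quantifier prefix of the $\mathsf{QBF}$. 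All numeric parameters are kept polynomial in $n+m$: the window size $W$ is set to a small multiple of the length of one assignment phase, the checkpoint distance is fixed to $D=1$, the reward function is one-dimensional with $\maxr{}$ bounded by $n$, and $\mu,\nu$ are rationals whose reduced numerators and denominators are $O(n)$. Hence every produced instance is \emph{small} in the required sense, and the reductions run in polynomial time.

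The window-stability objective $\obj=(W,D,\varrho,\mu,\nu)$ must simultaneously enforce three properties of the committed assignment. First, \emph{consistency}: because $D=1$ the windows overlap maximally, so any change in the truth value assigned to some variable between two successive phases shifts the sum of some window and is therefore immediately visible; the bounds are chosen so that the only way to keep \emph{every} window within $[\mu\cdot W,\nu\cdot W]$ is to repeat the same per-variable contributions in each phase, i.e.\ to commit to a single fixed assignment. Second, \emph{balance}: by assigning reward $1$ to the ``true'' branch and $0$ to the ``false'' branch of each variable gadget, the contribution of one phase encodes the number of variables set to true, and the target interval is chosen so that this number must equal $n/2$, which is exactly the balancedness requirement. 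Third, \emph{satisfaction}: I would add clause gadgets in which the player exhibits, for each clause, a literal it claims to be satisfied, together with auxiliary rewards designed so that the window bound can be met only if every exhibited literal agrees with the now-fixed assignment. The correctness statement to establish is then that Player~$\Box$ has a strategy achieving $\obj$ from the initial state iff the source $\mathsf{Balanced}$-$\mathsf{3}$-$\mathsf{SAT}$ / $\mathsf{Balanced}$-$\mathsf{QBF}$ instance is positive.

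The main obstacle is the design of the gadgets and rewards that realise all three constraints through a \emph{single one-dimensional} window-stability objective --- that is, through only the two scalar bounds $\mu$ and $\nu$ --- while keeping every reward unary-small. The difficulty is that a one-dimensional window sum aggregates all positions, so a naive cross-check between a clause's exhibited literal and the corresponding variable gadget can be cancelled by an opposite mismatch elsewhere; the novelty lies in exploiting the tight overlap to turn the family of sliding windows into enough linear constraints to both pin down a consistent, balanced assignment and certify each clause locally, without any reward blowup. Once the gadgets are in place, the forward direction of correctness follows by letting Player~$\Box$ replay a satisfying balanced assignment forever, and the backward direction follows by reading off an assignment from any winning run (resp.\ winning strategy) using the enforced consistency; in the game case one additionally argues that Player~$\Box$'s strategy wins against \emph{every} choice of the universal variables by Player~$\Diamond$, which yields exactly the $\mathsf{QBF}$ semantics and the $\PSPACE$ bound, whereas the single-player graph yields the $\NP$ bound.
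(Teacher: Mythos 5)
Your high-level plan coincides with the paper's: the same reductions from $\mathsf{Balanced}$-$\mathsf{3}$-$\mathsf{SAT}$ and $\mathsf{Balanced}$-$\mathsf{QBF}$, the same use of $D=1$ so that maximally overlapping windows force Player~$\Box$ to commit to one fixed assignment, the same use of balancedness to calibrate $\mu$ and $\nu$ around $n/2$, and the same idea of letting Player~$\Diamond$ own the universal variable gadgets. Up to that point your sketch is sound and matches the paper's construction (which takes $W=2(n+m)$, $\mu = \frac{n}{2W}$, $\nu = \frac{n}{2W}+\frac{1}{5W}$).

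However, there is a genuine gap exactly where you place the ``main obstacle'': the clause-satisfaction mechanism is asserted to exist (``auxiliary rewards designed so that the window bound can be met only if every exhibited literal agrees with the now-fixed assignment'') but never constructed, and this is the part of the proof that does not follow from the consistency and balance arguments alone. You correctly diagnose the danger --- a one-dimensional window sum aggregates all positions, so a surplus in one place can cancel a deficit in another --- but you do not show how to defeat it. The paper's resolution is concrete and worth internalizing: it arranges the $m{+}1$ assignment gadgets on a cycle, one per clause (plus one initial gadget where the quantifier alternation lives), separated by \emph{force} paths carrying a reward $-\frac{1}{10}$ immediately followed by $+\frac{1}{10}$. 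Inside the clause gadget for $C_j$, the state witnessing a satisfied literal carries reward $\frac{11}{10}$ (for a positive literal) or $\frac{1}{10}$ (for a negative one), i.e.\ the ordinary assignment reward plus an extra $\frac{1}{10}$. A window that ends right after the $-\frac{1}{10}$ state sees the subtraction but not the subsequent re-addition, so it can stay above $\mu\cdot W = \frac{n}{2}$ only if it picked up an extra $\frac{1}{10}$ from a witness state in the clause gadget it spans --- which is possible only if the committed assignment satisfies that clause; the immediately following $+\frac{1}{10}$ cancels the debt so the test does not leak into the next clause gadget. The upper bound $\nu$ is chosen with enough slack ($\frac{1}{5W}$, i.e.\ $\frac{1}{5}$ on the window sum) to tolerate up to two such $\frac{1}{10}$ surpluses in a window straddling two gadgets, but tight enough that an unbalanced assignment (which shifts the sum by at least $1 - \frac{6}{10} - \frac{1}{10}$) or an inconsistent one (shift $\geq \frac{9}{10}$ between consecutive windows) is excluded. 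Without exhibiting gadgets and constants with these quantitative properties, the backward direction of your reduction (reading off a satisfying assignment from a winning strategy) cannot be completed, so as written the proposal establishes only the consistency-and-balance half of the argument.
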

\begin{proof}[Proof (sketch)]
We proceed by reductions from $\mathsf{Balanced}$-$\mathsf{3}$-$\mathsf{SAT}$ for graphs and from $\mathsf{Balanced}$-$\mathsf{QBF}$ for games. As the reductions are somewhat technical, we explain just their core idea. The complete reduction can be found in Appendix~\ref{app:hardness}.

Assume a formula $\varphi$ in 3-CNF with variables $\{x_1,\dots,x_n\}$, $n$ being even. Consider the graph $\game$ in Figure~\ref{fig:hardness-illustration}. Both the “upper” gadget (consisting of unprimed states) and the “lower” gadget (with primed states) represent a standard “assignment choice” gadget, in which Player $\Box$ selects an assignment to variables in $\varphi$ (e.g. choosing an edge going to $t_1$ from $s_1$ corresponds to setting variable $x_1$ to \emph{true} etc.). With no additional constraints, $\Box$ can choose different assignments in the two gadgets, and she may change the assignment upon every new traversal of the lower gadget. Now assign reward $1$ to states that correspond to setting some variable to \emph{true} and $0$ to all the other states, let window size $W=2n$, checkpoint distance $D=1$, $\mu=\frac{n}{2}$, and $\nu = \frac{n}{2}+\frac{1}{3n}$ (say). In order to satisfy the window-stability objective $(W,D,\rho,\mu,\nu)$ from $s_1$, $\Box$ has to select a balanced assignment in the upper gadget and moreover, mimic this assignment in all future points in the lower gadgets. The necessity of the first requirement is easy. For the second, assume that there is some $\ell$ such that in the \mbox{$\ell$-th} step of the outcome $\run$ the player chooses to go from, say, $s_i$ to $t_i$ (or from $s_i'$ to $t_i'$), while in the $(\ell+2n)$-th step she goes from $s_i'$ to $f_i'$. Then the rewards accumulated within windows starting in the $\ell$-th and $(\ell+1)$-th step, respectively, differ by exactly one. Thus, $|\lmp_{W,D,\ell}(\run) - \lmp_{W,D,\ell+1}(\run)|=1/2n> 1/3n$, which means that the local mean payoffs at the $\ell$-th and $(\ell+1)$-th checkpoint cannot both fit into the interval $[\mu,\nu]$.  
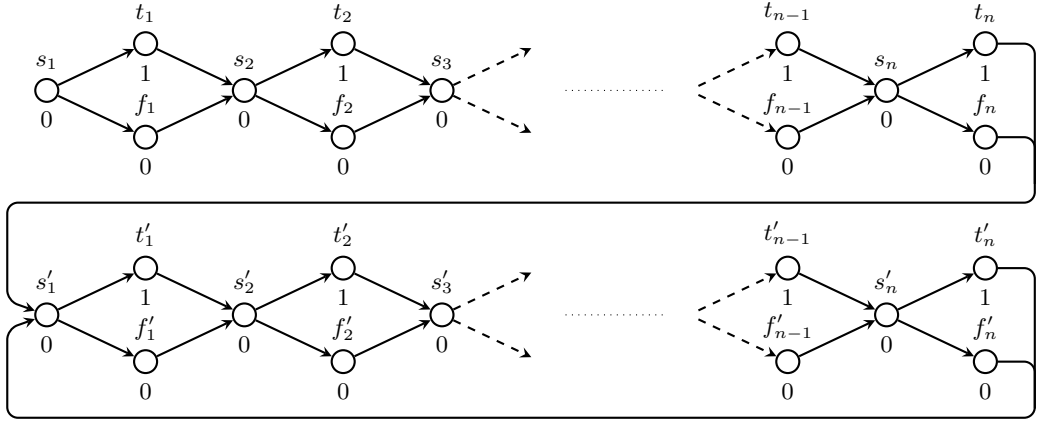
\begin{figure}[t]
\begin{tikzpicture}[x=1.3cm,y=0.62cm,font=\small]
\foreach \xc/\yc/\nnum/\ind in {0/0/1/1,2/0/2/2,4/0/3/3,8.5/0/4/n}  {%
	\node[label=below:$0$, label=above:$s_{\ind}$] (s\nnum) at (\xc,\yc) [ran] {};
}
\foreach \xc/\nnum/\ind in {1/1/1,3/2/2,7.5/3/n-1,9.5/4/n} {%
    \node[label=below:$1$, label=above:$t_{\ind}$] (t\nnum) at (\xc,1) [ran] {};
     \node[label=below:$0$, label=above:$f_{\ind}$] (f\nnum) at (\xc,-1) [ran] {};
}
\foreach \num/\xc in {3/5}{%
\node (td\num) at (\xc,1) {};
\node (fd\num) at (\xc,-1) {};
\begin{scope}[shift = {(0,-4.8)}]
\node (tdp\num) at (\xc,1) {};
\node (fdp\num) at (\xc,-1) {};
\end{scope}
}
\node (sd3) at (6.5,0) {};
\node (sdp3) at (6.5,-4.8) {};
\begin{scope}[shift = {(0,-4.8)}]
\foreach \xc/\yc/\nnum/\ind in {0/0/1/1,2/0/2/2,4/0/3/3,8.5/0/4/n}  {%
	\node[label=below:$0$, label=above:$s_{\ind}'$] (sp\nnum) at (\xc,\yc) [ran] {};
}
\foreach \xc/\nnum/\ind in {1/1/1,3/2/2,7.5/3/n-1,9.5/4/n} {%
    \node[label=below:$1$, label=above:$t_{\ind}'$] (tp\nnum) at (\xc,1) [ran] {};
     \node[label=below:$0$, label=above:$f_{\ind}'$] (fp\nnum) at (\xc,-1) [ran] {};
}
\end{scope}
\foreach \num in {1,2,4} {%
\draw [tran] (s\num) to (t\num);
\draw [tran] (sp\num) to (tp\num);
\draw [tran] (s\num) to (f\num);
\draw [tran] (sp\num) to (fp\num);
}
\foreach \src/\tgt in {1/2,2/3,3/4} {%
\draw [tran] (t\src) to (s\tgt);
\draw [tran] (f\src) to (s\tgt);
\draw [tran] (tp\src) to (sp\tgt);
\draw [tran] (fp\src) to (sp\tgt);
}
\foreach \num in {3}{%
\draw [tran,dashed] (s\num) to (td\num);
\draw [tran,dashed] (sp\num) to (tdp\num);
\draw [tran,dashed] (s\num) to (fd\num);
\draw [tran,dashed] (sp\num) to (fdp\num);
}
\draw [tran,dashed] (sd3) to (t3);
\draw [tran,dashed] (sdp3) to (tp3);
\draw [tran,dashed] (sdp3) to (fp3);
\draw [tran,dashed] (sd3) to (f3);

\draw [dotted] (5.25,0) to (6.25,0) {};
\draw [dotted] (5.25,-4.8) to (6.25,-4.8) {};
\draw [tran, rounded corners] (t4) -- ++(0.5,0) -- ++(0,-3.4) -- (-0.4,-2.4) -- (-0.4,-2.4 |- -0.4,-4.5) -- (sp1);
\draw [-,thick, rounded corners] (f4) -- ++(0.5,0) -- ++(0,-1);
\draw [tran, rounded corners] (tp4) -- ++(0.5,0) -- ++(0,-3.2) -- (-0.4,-7) -- (-0.4,-7 |- -0.4,-5.1) -- (sp1);
\draw [-,thick, rounded corners] (fp4) -- ++(0.5,0) -- ++(0,-1);
\end{tikzpicture}
\caption{In the lower gadget, Player $\Box$ must mimic the assignment she chose in the upper one.}
\label{fig:hardness-illustration}
\end{figure}

Note that we use the balanced variant of $\mathsf{3}$-$\mathsf{SAT}$ and $\mathsf{QBF}$, as to set up $\mu$ and $\nu$ we need to know in advance the number of variables assigned to \emph{true}.

Once we force the player to commit to some assignment using the above insight, we can add more copies of the “primed” gadget that are used to check that the assignment satisfies $\varphi$. Intuitively, we form a cycle consisting of several such gadgets, one gadget per clause of $\varphi$, the gadgets connected by paths of suitable length (not just by one edge as above). In each clause-gadget, satisfaction of the corresponding clause $C$ by the chosen assignment is checked by allowing the player to accrue a small additional reward whenever she visits a state representing satisfaction of some literal in $C$. This small amount is then subtracted and added again on a path that connects the current clause-gadget with the next one: subtracting forces the player to satisfy at least one literal in the previous clause-gadget (and thus accrue the amount needed to “survive” the subtraction) while adding ensures that this “test” does not propagate to the next clause-gadget. Rewards have to be chosen in a careful way to prevent the player from cheating. For $\PSPACE$-hardness of the game version we simply let the adversary control states in the initial gadget (but not in clause-gadgets) corresponding to universally quantified variables. %
\end{proof}

\section{The variance-stability problem}
\label{sec-results-variance}

In this section, we prove the results about variance-stability objectives promised in Section~\ref{sec-intro}. %

\begin{theorem}\label{thm:variance-in-NP}
The existence of a strategy achieving a given one-dimensional variance-stability objective for a given state of a given graph is in $\NP$. Further,
the strategy may require infinite memory. 
\end{theorem}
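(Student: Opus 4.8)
The plan is to handle the two halves separately, both routed through realizable long-run frequencies and the MDP results of \cite{BCFK13}.

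I would start from the elementary rewriting of the two quantities in terms of frequencies. For a run $\run = s_0,s_1,\ldots$ whose empirical state-frequencies converge to a distribution $f$ on $\states$ respecting the flow (Kirchhoff) constraints of the graph, the mean payoff equals $m(f) = \sum_{s} f(s)\,\varrho(s)$ and, writing $q(f) = \sum_s f(s)\,\varrho(s)^2$, the long-run variance equals $V(f) = q(f) - m(f)^2$; this follows by expanding $(\varrho(s_i) - \lr_\varrho(\run))^2$ and using convergence of the running averages. The easy direction is then immediate: inside a strongly connected region every rational flow $f$ is realized by an eventually periodic run, whose running averages converge, so it achieves mean payoff exactly $m(f)$ and variance exactly $V(f)$ using finite memory.

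For $\NP$ membership I would let the algorithm guess a reachable strongly connected region $R$ of $\game$ together with a certificate describing the intended long-run behaviour inside $R$, and then verify the two bounds in polynomial time. The point at which a single frequency vector need not suffice is the asymmetry between the $\liminf$ defining $\lr_\varrho$ and the $\limsup$ defining $\va_\varrho$, and this is exactly where I would invoke \cite{BCFK13}: their characterisation of the achievable (mean, variance) region for MDPs can be instantiated on $\game$ regarded as an MDP whose edges are deterministic controller actions. That characterisation supplies a certificate of polynomial size (essentially the recurrent region together with the frequency data of the optimal behaviour), and verification reduces to checking $m \ge b$ and $V \le c$ against this certificate plus feasibility of the guessed flow. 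I would then argue that whenever the objective is achievable at all, such a rational, polynomial-size certificate exists, so the nondeterministic guess can succeed.

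The genuinely hard step, and the one I expect to dominate the work, is the translation between models flagged in the introduction: \cite{BCFK13} produces randomized strategies with stochastic memory updates, whereas here a strategy must deterministically produce a single run. I would show that any (mean, variance) pair attainable by such a randomized strategy on $R$ is attainable by a deterministic run on $R$, concatenating blocks that realise the certified frequencies with increasing accuracy while pinning the $\liminf$ of the running reward-average and the $\limsup$ of the sample deviations to their targets; the delicate point is that the running average has unbounded memory, so the block schedule must be tuned so as not to perturb these two extremal limits. Conversely I would bound what any deterministic run on $R$ can achieve, in order to certify completeness of the guessed certificate.

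Finally, for the infinite-memory claim I would exhibit a small graph with two ``modes'' of behaviour and an objective $(\varrho,b,c)$ sitting on the boundary forced by the $\liminf$/$\limsup$ definitions, and prove a separation. Every finite-memory strategy induces an eventually periodic run, hence a single convergent frequency $f$ with fixed values $m(f)$ and $V(f)$, and I would check by a short case analysis over such $f$ that no finite-memory strategy can meet both bounds; by contrast, an infinite-memory strategy that alternates the two modes in phases whose lengths grow so that the switching frequency tends to $0$ does meet them. Verifying that this growing-phase schedule drives $\lr_\varrho$ and $\va_\varrho$ to the required values — so that the two limits land on opposite sides of the finite-memory obstruction — is the technical heart of this part, and I expect it to be the most error-prone computation in the argument.
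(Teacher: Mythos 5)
Your proposal follows essentially the same route as the paper: both reduce the problem to the existence of a flow-constrained frequency vector satisfying the (linear) mean-payoff and (concave quadratic) variance inequalities via the characterisation of \cite{BCFK13}, obtain $\NP$ membership from the resulting quadratic feasibility check (the paper invokes the $\NP$ result for negative semi-definite programs where you assert a rational polynomial-size witness, which is justifiable since the variance is concave in the frequencies and hence minimised at a vertex of the flow polytope), realise the frequencies deterministically by concatenating cycles/blocks of growing length and accuracy, and establish the infinite-memory claim with a two-mode graph on which finite-memory strategies yield eventually periodic runs whose single limit frequency cannot meet both bounds. The only points needing care, which you correctly flag, are the irrationality of the certified frequencies and the fact that a rational flow whose support is not strongly connected is only realisable in the limit, both handled in the paper by the same approximation scheme you describe.
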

\begin{proof}
The proof is based on adaptation of techniques for the hybrid variance from~\cite{BCFK13} to the non-stochastic setting. 
In particular, we consider the graph as a Markov decision process. Subsequently, we apply results of~\cite{BCFK13} and reduce variance-stability problem to the problem of finding an appropriate solution of a negative semi-definite program, which belongs to $\NP$.
As part of the proof we show how to construct strategies that move through edges with the frequencies determined by the program. The proof is considerably complicated by the fact that the frequencies may be irrational and thus further limit processes are needed. Details can be found in Appendix~\ref{app:variance}.
\end{proof}
We now show that variance-stability objectives may require strategies with infinite memory. Consider the graph in Figure~\ref{fig-variance-counterexample}, and the variance-stability objective which requires to achieve the mean payoff at least $3/2$ and long-run variance at most $9/4$. 

Observe that there is an infinite-memory strategy solving the variance-stability which works as follows: We start in the state $A$, the strategy proceeds in infinitely many phases. In the $n$-th phase it goes $n$ times from $A$ to $B$ and back. Afterwards it goes to $D$, makes $2n$ steps on the loop on $D$, and then returns back to $A$. One can easily show, using the same technique as in the proof of Theorem~\ref{thm:variance-in-NP}, that the mean payoff converges along this run. The limit is obviously $4/2+0/2 + 1/2=\frac{3}{2}$ since the $-10$ reward is obtained with zero frequency. The long-run variance is 
\(
\frac{1}{4}\left(-\frac{3}{2}\right)^2 + \frac{1}{4}\left(4-\frac{3}{2}\right)^2 + \frac{1}{2}\left(1-\frac{3}{2}\right)^2=\frac{9}{4}.
\)
Now we show that there is no finite-memory strategy achieving the mean payoff $3/2$ and the long-run variance $9/4$. Note that the maximal mean payoff achievable (without any constraints) in the graph is $2$. %

Assume that there is a finite memory strategy $\sigma$ yielding mean payoff $x$ with $3/2 \le x \le 2$, and variance at most $9/4$. We first argue that $\sigma$ visits $C$ with zero frequency.
Denote by $f_Y$ the frequency of state $Y$.
Because $x = 0\cdot f_A + 4\cdot f_B + (-10)\cdot f_C + 1\cdot f_D$ by the definition of mean payoff, and also $f_A = f_B$ and $f_D = 1 - f_A -f_B - f_C$ by the definition of our graph,
we have $f_A = (x + 11\cdot f_C - 1)/2$ and $f_D = 2 - x - 12\cdot f_C$.
Thus, the variance is
{\small%
\begin{align*}
 f_A \cdot &(0-x)^2 + f_B \cdot (4-x)^2 + f_C \cdot(-10 - x)^2 + f_D (1 -x)^2\\
  &= \frac{x-1}{2} \cdot \big((0-x)^2 + (4-x)^2\big) + (2 - x) \cdot(1 -x)^2 \\
  &\qquad 
  + f_C\cdot \Big(\frac{11}{2}\cdot \big((0-x)^2 + (4-x)^2\big) + (-10 - x)^2 - 12 \cdot(1 -x)^2\Big).
\end{align*}}%
Using calculus techniques one can easily show that the first term is at least $9/4$ for all $x\in[3/2,2]$, while the parenthesized expression multiplied by $f_C$ is positive for all such $x$. Hence $f_C=0.$ But any finite-memory strategy that
stays in $C$ with frequency $0$ either eventually loops on $D$, in which case the mean payoff is only $1$, or it eventually loops on $A$ and $B$, in which case the
variance is $4$.

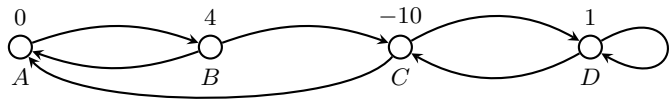
\begin{figure}
\begin{center}
	\begin{tikzpicture}[x=1cm,y=1cm,font=\small]
	\node[label=above:$4$, label=below:$B$] (B) at (0,0) [ran] {};
	\node[label=above:$0$, label=below:$A$] (A) at (-2.5,0) [ran] {};
	\node[label=above:$1$, label=below:$D$] (D) at (5,0) [ran] {};
	\node[label=above:$-10$, label=below:$C$] (C) at (2.5,0) [ran] {};

	\draw[tran, bend left=20] (B) to (A);
	\draw[tran, bend left=20] (A) to (B);
	\draw[tran, bend left=20] (B) to (C);

	\draw[tran, bend left] (D) to (C);
	\draw[tran, bend left] (C) to (D);
	\draw[tran, out=-130, in=-50, looseness=0.5] (C) to (A);
	\draw[loop right] (D) to (D);
	\end{tikzpicture}
\end{center}
\caption{One player game in which there is an infinite-memory strategy $\sigma$ such that $\lr(\outcome{\sigma}{\pi}{\sinit}) \ge 3/2$ and
$\va(\outcome{\sigma}{\pi}{\sinit}) \le 9/4$ (here $\pi$ is the only ``trivial'' strategy of the environment).  However, there is no finite-memory $\sigma$ with this property.}
\label{fig-variance-counterexample}
\end{figure}

Even finite-memory strategies that would approximate the desired variance-stability (up to some $\varepsilon>0$ error) must behave in a rather peculiar way: Infinitely many times stay in $\{A,B\}$ for a large number of steps (depending on $\varepsilon$) and also stay in $C$ for a large number of steps.
Hence, if the strategy was applied to a real-life system, a user would observe two disjoint repeating phases, one with low mean payoff but high instability, and one with low stability and high mean payoff.

\bibliographystyle{abbrv}
\bibliography{papers,bibtex-petr}

\newpage
\appendix
\begin{center}
\LARGE \bf Technical Appendix
\end{center}

\section{Proofs for Section~\ref{sec-algorithm}}

\begin{refproposition}{prop-sel-conj}
	Let $\game = (S, (S_\Box,S_\Diamond),E)$ be a finite-state game, and $n \in \Nset$. Further, for every $1 \leq i \leq n$, let $\Gamma_i = (\mem_i,\up_i,\sel_i,\init_i)$ be a strategy scheme for $\game$ which is permissive for $\obj_i$. Then there is a strategy scheme for $\game$ with $\prod_{i=1}^n |\mem_i|$ memory elements computable in $\calO(|S|^2 \cdot |E| \cdot \prod_{i=1}^n |\mem_i|^2)$ time which is permissive for $\obj_1 \wedge \cdots \wedge \obj_n$.
\end{refproposition}
\begin{proof}
	Intuitively, a strategy scheme $\Gamma = (\mem, \up, \sel, \init)$ which is permissive for $\obj_1 \wedge \cdots \wedge \obj_n$ is obtained as a ``synchronized product'' of all $\Gamma_i$. We put
	\begin{itemize}
		\item $\mem = \mem_1 \times \cdots \times \mem_n$
	\end{itemize}

    \noindent
    and define the memory update function by
    \begin{itemize}
    	\item $\up(s,(m_1,\ldots,m_n)) = (\up_1(s,m_1),\ldots,\up_n(s,m_n))$.
    \end{itemize}
    
    \noindent
    The constrainer requires more care. One might be tempted to define $\sel(s,(m_1,\ldots,m_n))$ as the intersection of all $\sel_i(s,m_i)$. However, this intersection may be empty, and hence must identify all elements of $S \times \mem$ which can reach such ``problematic'' $(s,(m_1,\ldots,m_n))$. Therefore, we define a function 	$\calF : 2^{S \times \mem} \rightarrow 2^{S \times \mem}$ as follows: For a given $\Omega \subseteq S \times \mem$, the set $\calF(\Omega)$ consists of all $(s,m) \in S \times \mem$ satisfying one of the following conditions:
    \begin{itemize}
    	\item $s \in S_\Box$ and there exists $s' \in  \bigcap_{i=1}^n \sel_i(s,m_i)$ such that $(s',\up(s,m)) \in \Omega$, where $m = (m_1,\ldots,m_n)$;
    	\item $s \in S_\Diamond$ and for all $(s,s') \in E$ we have that $(s',\up(s,m)) \in \Omega$.
    \end{itemize}
    
    \noindent
    Let $\gfix(\calF)$ be the greatest fixed-point of $\calF$. The constrainer $\sel$ is defined as follows:
    \begin{itemize}
    	\item $\sel(s,(m_1,\ldots,m_n))$ consists of all $s' \in S$ such that $s' \in \bigcap_{i=1}^n \sel_i(s,m_i)$ and $(s',\up(s,(m_1,\ldots,m_n))) \in \gfix(\calF)$.
    \end{itemize}
    
    \noindent
    Finally, we put
    \begin{itemize}
    	\item $\init = \{(s,(m_1,\ldots, m_n)) \mid (s,m_i) \in \init_i(s) \mbox{ for all } 1 \leq i \leq n \} \cap \gfix(\calF)$.
    \end{itemize}

	\noindent
	Observe that $\calF$ is monotone, and $\gfix(\calF)$ is computable in $\calO(|S|^2 \cdot |E| \cdot \prod_{i=1}^n |\mem_i|^2)$ time by the standard iteration algorithm. Hence, $\Gamma$ is also computable in  $\calO(|S|^2 \cdot |E| \cdot \prod_{i=1}^n |\mem_i|^2)$ time.  It is easy to check that $\Gamma$ is permissive for $\obj_1 \wedge \cdots \wedge \obj_n$.
\end{proof}

\section{Proofs for Section~\ref{sec-hardness}}
\label{app:hardness}

First we formally define the $\mathsf{Balanced}$-$\mathsf{QBF}$ problem. Let $\psi = Q_1 x_1\cdots Q_n x_n\, \varphi$ be a quantified boolean formula in a 3-CNF prenex normal form (3-CNF-PNF), i.e. for each $1\leq i \leq n$ we have $Q_i \in \{\forall,\exists\}$ and $\varphi$ is a quantifier-free formula in 3-CNF containing only variables from $\{x_1,\dots,x_n\}$. A \emph{model} of $\psi$ is a rooted directed tree $\calT$ satisfying the following properties:
\begin{itemize}
\item nodes of $\calT$ are labelled by elements of $\{1,\dots,n+1\}$, the root being labelled by $1$, all leaves by $n+1$, and children of each node labelled $i\leq n$ being labelled by $i+1$;
\item edges in $\calT$ are labelled by truth values $0$, $1$;
\item if $Q_i = \forall$, then all nodes labelled by $i$ have exactly two children, one connected via edge labelled by $0$ and the other by edge labelled by $1$; if $Q_i = \exists$, all nodes labelled by $i$ have exactly one child;
\item for each leaf $v$, the truth assignment induced by the unique path $p$ from the root to $v$ (i.e. $x_i$ is assigned the truth value labelling the unique edge on $p$ outgoing from an $i$-labelled node) satisfies $\varphi$.
\end{itemize}

\noindent
A model $\calT$ is \emph{balanced} if $n$ is even and all the assignments induced by root-leaf paths in $\calT$ assign $1$ to exactly half of the $n$ variables. In a $\mathsf{Balanced}$-$\mathsf{QBF}$ problem we are given a formula $\psi$ in a 3-CNF-PNF and we ask whether this formula admits a balanced model.

\begin{reflemma}{lem:balanced-hardness}
The $\mathsf{Balanced}$-$\mathsf{QBF}$ problem is $\PSPACE$-complete. The $\mathsf{Balanced}$-$\mathsf{3}$-$\mathsf{SAT}$ is $\NP$-complete.
\end{reflemma}
\begin{proof}
We prove the $\PSPACE$-completeness of the $\mathsf{Balanced}$-$\mathsf{QBF}$ problem, the result for the satisfiability variant can be obtained via identical reasoning. The membership in $\PSPACE$ can be easily tested by a non-deterministic polynomial-space Turing machine in the same manner as for the standard $\mathsf{QBF}$ problem. To prove the hardness result we employ a polynomial reduction from standard $\mathsf{QBF}$. Let $\psi=Q_1 x_1\cdots Q_n x_n\, \varphi$ be a formula in a 3-CNF-PNF. Clearly, $\psi$ is true iff it has a model. We construct a new formula $\psi'$ (again in 3-CNF-PNF) such that $\psi$ has a model iff $\psi'$ has a balanced model. We introduce new existentially quantified variables $x_{n+1},\dots,x_{2_n}$ and extend $\varphi$ with $n$ clauses of the form $x_i \vee \neg x_i$, $n+1\leq i \leq 2n$. Then every model $\calT$ of $\psi$ can be easily extended into balanced model of $\psi'$ as follows: for each leaf $v$ of $\calT$ we count the number $o$ of $1$'s on the path from the root to $v$. We then append to $v$ a path $p$ of length $n$ such that first $n-o$ edges on $p$ are labelled by $1$ and the remaining ones by $0$ (the nodes on $p$ are labelled according to the definition of a model). Conversely, every balanced model $\calT'$ of $\psi'$ can be pruned into a model of $\calT$ by simply removing all nodes with label $i> n+1$.
\end{proof}

\begin{reftheorem}{thm:hardness-unary}
	Solving the window-stability objectives with one-dimensional reward functions is $\PSPACE$-hard for games and \mbox{$\NP$-hard} for graphs, even if it is restricted to small instances.
\end{reftheorem}
\begin{proof}
We prove the $\PSPACE$-hardness for games, the result for graphs will then follow easily. We proceed by reduction from the $\mathsf{Balanced}$-$\mathsf{QBF}$ problem. Let $\psi = Q_1 x_1\cdots Q_n x_n\, \varphi$ be a quantified boolean formula in 3-CNF-PNF with $n$ even. We show how to construct, in polynomial time, a game $\game$, a state $s$ of $\game$, and a window-stability objective $\obj = (W,D,\varrho,\mu,\nu)$ where $\dimr{} = 1$ such that $(\game,s,\obj)$ is a small instance and $\psi$ admits a balanced model iff there is a strategy of player $\Box$ in game $\game$ achieving the stability objective $(W,D,\varrho,\mu,\nu)$ in the state $s$.

Let $\varphi=C_1 \wedge \cdots \wedge C_m$, where each $C_j$ is a clause. We put $W=2(n + m)$, $D=1$, $\mu = \frac{n}{2W}$, and $\nu = \frac{n}{2W}+\frac{1}{5W}$. We then construct the game $\game$ and the reward function $\varrho$ out of several gadgets: for each $0\leq j \leq m$ we have gadgets $\game_j^{\mathit{sat}}$ and $\game_j^{\mathit{force}}$. 

The gadget $\game_0^{\mathit{sat}}$ consists of states $s_0^i,t_0^{i,1},t_0^{\ i,0}$, $1\leq i \leq n$. A state $s_0^i$ belongs to Player $\Box$ iff $Q_i = \exists$, otherwise it belongs to $\Diamond$. All the other states %
belong to Player~$\Box$. For each $1\leq i \leq n$ there are edges $(s_0^i,t_0^{i,1}),(s_0^i,t_0^{i,0})$ and for $1\leq i <n$ we have edges $(t_0^{i,1},s_0^{i+1}),(t_0^{i,0},s_0^{i+1})$. States of the form $s_0^{i,1}$, $1\leq i \leq n$, receive reward $1$, all other states have reward $0$.

For each $1\leq j \leq m$, the gadget $\game_j^{\mathit{sat}}$ is defined similarly, but with certain differences: We have states $s_j^i,t_j^{i,1},t_j^{i,0}$, $1\leq i \leq n$, all of them belonging to $\Box$. Moreover, for each variable $x_k$ in $C_j$ we have Player $\Box$'s state $r_j^{k,z(j,k)}$, where $z(j,k)$ is $0$ or $1$ depending on whether $x_k$ appears negated in $C_j$ or not (if it appears bot negated and unnegated, any choice can be taken, for concreteness we put $z(j,k)=1$). For each $1\leq i < n$ we have edges $(s_j^i,t_j^{i,1}),(s_j^i,t_j^{i,0}),(t_j^{i,1},s_j^{i+1}),(t_j^{i,0},s_j^{i+1})$. Additionally we have edges $(s_j^n,t_j^{n,1}), (s_j^n,t_j^{n,0})$, and for each variable $x_k$ in $C_j$ an edge $(s_j^k,r_j^{k,z(j,k)})$ and, if $k<n$, an edge $(r_j^{k,z(j,k)},s_j^{k+1})$. Rewards are assigned as follows: states of the form $t_j^{i,1}$ for some $i$ have reward $1$, state of the form $r_j^{k,z(j,k)}$ has reward $\frac{11}{10}$ or $\frac{1}{10}$ depending on whether $z(j,k)=1$ or not, and all other states in $\game_j^{\mathit{sat}}$ have reward $0$.

We connect the gadgets $\game_j^{\mathit{sat}}$ with gadgets $\game_j^{\mathit{force}}$. For each $0\leq j \leq m$ the gadget $\game_j^{\mathit{force}}$ consists of states $u_j^{\ell}$, $1\leq \ell \leq 2m$ and edges $(u_j^{\ell},u_{j}^{\ell+1})$, $1\leq \ell < 2m$. We have
\[
\rho(u_j^{\ell}) = \begin{cases}
-\frac{1}{10} & \text{if $j>0$ and $\ell=2j-1$}\\
\frac{1}{10} & \text{if $j>0$ and $\ell=2j$}\\
0 & \text{otherwise}.
\end{cases}
\]
Finally, the connection of aforementioned gadgets is realized as follows: for each $0 \leq j \leq m$ we add edges $(t_j^{n,1},u_j^0)$, $(t_j^{n,0},u_j^0)$ and $(u_j^{2m},s_{(j \Mod{m}) + 1}^0)$. Moreover, for each $1\leq j \leq m$, if the above construction yields a state of the form $r_j^{n,z(j,n)}$, then we also add an edge $(r_j^{n,z(j,n)},u_j^0)$.

To prove the correctness of the reduction we employ additional handy notions. A path $s_1s_2\dots s_k \in S^*$ (here $S$ is the state set of $\game$) is an \emph{assignment path} if it does not contain cycles and moreover, for each $1\leq i \leq n$ it contains exactly one state from the set $\{t_j^{i,1},t_j^{i,0},r_j^{i,z(j,i)} \mid 0 \leq j \leq m\}$. An assignment path $h$ determines an assignment $\eta_h$ such that $\eta_h(x_i)=1$ if $h$ contains a state of the form $t_{j}^{i,1}$ or $r_j^{i,1}$ for some $j$, and $\eta_h(x_i)=0$ otherwise. 

We also extend the function $\rho$ to finite paths in $\game$: for $h=s_1,s_2,\dots,s_k$ we put $\rho(h)=\sum_{i=1}^{k}\rho(s_k)$. If $h$ is an acyclic path and $u_1,\dots,u_{k'}$ are all the states of the form $u_j^\ell$ contained in $h$, then we call the number $F(h)=\sum_{i=1}^{k'}\rho(u_i)$ a \emph{force}-reward of $h$. We also denote by $A(\run_k)$, $B(\run_k)$ and $C(\run_k)$ the numbers of occurrences of states of the form $t_j^{i,1}$, $r_j^{i,1}$ and $r_j^{i,0}$ in $\run_k$, respectively. Note that 
\begin{equation}
\label{eq:hardness-cost}
\rho(h) = A(h) + B(h) + (B(h) + C(h))/10 + F(h).
\end{equation}

Now let $\run=s_0,s_1,\dots$ be an arbitrary run in $\game$. For $k\in \Nset_0$ define a path $\run_k = s_k,s_{k+1},\dots,s_{k+2(n+m)}$.
The following claim easily follows from the construction of $\game$.

\begin{claim}
\label{cl:hardness}
 Let $\run$ be an arbitrary run in $\game$. Then for every $k\in \Nset_0$ the path $\run_k$ is an assignment path satisfying \emph{exactly} one of the following conditions
\begin{enumerate}
\item $\run_k$ contains states from exactly two distinct gadgets of the form $\game_j^{\mathit{sat}}$ and $F(\run_k)=0$; or 
\item $\run_k$ contains states from exactly on gadget of the form $\game_j^{\mathit{sat}}$ and $F(\run_k)\in [-\frac{1}{10},\frac{1}{10}]$. Moreover, in this case $\run_k$ contains, for each $1\leq i \leq n$ exactly one state from $\{t_j^{i,1},t_{j}^{i,0},r_{j}^{i,z(j,i)}\}$ and the state $s_j^i$.
\end{enumerate}
\end{claim}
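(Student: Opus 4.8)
The plan is to read off the conclusion from the rigid, almost deterministic shape of $\game$, arguing purely about the structure of an \emph{arbitrary} run $\run$ (the ownership of states, and hence the strategies, plays no role in this claim). First I would record the only place a run can branch: from each $s_j^i$ a run moves in one step to a state of $\{t_j^{i,1},t_j^{i,0},r_j^{i,z(j,i)}\}$ and from there deterministically to $s_j^{i+1}$, or, when $i=n$, into the force gadget $\game_j^{\mathit{force}}$, which is a single deterministic path $u_j^1\to\cdots\to u_j^{2m}$ ending at $s_{(j\bmod m)+1}^1$. Hence every run factors uniquely into consecutive \emph{blocks}, each block consisting of one traversal of a sat gadget ($2n$ states, containing for every index $i$ exactly one ``choice'' state, encountered in increasing order of $i$) followed by one traversal of a force gadget ($2m$ states, containing no choice state). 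Thus a block has exactly $W=2(n+m)$ states, the blocks tile $\run$, and the offset $q=k\bmod W$ of the window $\run_k$ inside a block is well defined; $q$ alone determines which gadget boundaries, which choice states, and which special force states the window meets. (Block $0$ is the degenerate case in which the force gadget carries only zero rewards.)

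I would then dispose of the ``assignment path'' part. Acyclicity is immediate: the shortest cycle in the reachable part of $\game$ is the full tour through all clause gadgets, of length $mW\geq W$, so a length-$W$ infix never repeats a state. For the ``exactly one choice per index'' condition the decisive point is that $\run_k$ has the length of exactly one block, so as it slides it meets each of the $W$ block-positions once, and therefore meets exactly the $n$ choice-positions, one for each index $i$; this already yields a well-defined induced assignment $\eta_{\run_k}$.

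Finally I would prove the dichotomy by splitting on $q$. If $1\le q\le 2n-1$ the window begins strictly inside a sat gadget, so it is a nonempty suffix of sat gadget $j$, then the \emph{whole} force gadget $j$, then a nonempty prefix of sat gadget $j+1$: it meets two sat gadgets, and because it contains the entire force gadget $j$ both of its special rewards $-\tfrac1{10}$ (at $u_j^{2j-1}$) and $+\tfrac1{10}$ (at $u_j^{2j}$) occur, giving $F(\run_k)=0$ --- this is case~1. Otherwise $q=0$ or $2n\le q\le W-1$, and the window contains exactly one \emph{complete} sat gadget together with only partial force gadgets; it therefore meets a single sat gadget, and being the full gadget it contains every $s_j^i$ and one choice per $i$, as the moreover-clause requires. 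The force states it sees form a suffix of one force gadget together with a prefix of the next, and here lies the only genuine bookkeeping: since inside each gadget the $-\tfrac1{10}$ immediately precedes its $+\tfrac1{10}$, a captured suffix contributes net force-reward in $\{0,+\tfrac1{10}\}$ and a captured prefix net in $\{-\tfrac1{10},0\}$, whence $F(\run_k)\in\{-\tfrac1{10},0,+\tfrac1{10}\}\subseteq[-\tfrac1{10},\tfrac1{10}]$ --- this is case~2. As the two ranges of $q$ partition $\{0,\dots,W-1\}$, exactly one case applies. I expect the last estimate to be the main obstacle: one must track, against the $j$-dependent positions $2j-1,2j$ of the special rewards, precisely which of $\pm\tfrac1{10}$ a truncated force gadget retains, and it is exactly this placement of the special rewards in the construction that makes the bound $F(\run_k)\in[-\tfrac1{10},\tfrac1{10}]$ come out right.
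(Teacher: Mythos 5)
Your argument is correct and is precisely the routine structural verification that the paper omits (it merely asserts that the claim ``easily follows from the construction of $\game$''): the block decomposition into sat-gadget plus force-gadget segments of total length $W$, acyclicity via the fact that the only cycle has length $mW\geq W$, and the case split on the window offset with the suffix/prefix bookkeeping of the $\pm\frac{1}{10}$ rewards all check out. The only cosmetic point is that for a run not initiated at $s_0^1$ the offset $q$ should be taken as the intrinsic block-position of the state $s_k$ rather than $k \bmod W$, but this changes nothing in the argument.
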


\noindent
A run $\run$ in $\game$ is \emph{self-consistent} if for all $k,k'\in \Nset_0$ the assignments induced by the assignment paths $\run_k$ and $\run_{k'}$ are identical
A strategy $\sigma\in \Sigma_{\game}$ is \emph{self-consistent} if for all $\pi\in \Pi_\game$ the run $\outcome
{\sigma}{\pi}{s_0^1}$ is self-consistent.

Now assume that the formula $\psi$ admits a balanced model $\calT$. We define a strategy $\sigma$ in $\game$ as follows: for each finite path $h\in S^*S_{\Box}$ initiated in $s_0^1$ we have that
\begin{itemize}
\item if $h=s_0^1,t_0^{1,\ell_1},s_0^2,t_0^{2,\ell_2},\dots,t_{0}^{i-1,\ell_{i-1}},s_0^i$ for some $1\leq i \leq n$ and there is a node $v$ in $\calT$ such that $\ell_1,\dots,\ell_{i-1}$ is the sequence of labels of edges on the unique path from root of $\calT$ to $v$, then we put $\sigma(h)=t_0^{i,\ell}$, where $\ell$ is the label of the unique edge in $\calT$ outgoing from $v$;
\item if $h$ ends with a state of the form $s_j^i$, where $j>0$, then the length of $h$ is at least $2(n+m)$. The prefix of $h$ of length $2n$ is an assignment path inducing an assignment $\eta$. If $i$ is the smallest number in $\{1,\dots,n\}$ such that either $\eta(x_i)=1$ and $C_j$ contains literal $x_i$ or $\eta(x_i)=0$ and $C_j$ contains literal $\neg x_i$ (but not literal $x_i$), then we put $\sigma(h)=r_j^{i,\eta(x_i)}$; otherwise we put $\sigma(h)=t_j^{i,\eta(x_i)}$.
\item In all other cases there is only one outgoing edge from the last state of $h$, and $\sigma$ selects this edge.
\end{itemize}

\noindent
For $h$ not initiated in $s_0^1$ we can define $\sigma(h)$ arbitrarily. The second item in the definition of $\sigma$ ensures that $\sigma$ is self-consistent. Now let $\pi\in\Pi_\game$ be arbitrary and denote $\run=\outcome{\sigma}{\pi}{s_0^1}$. We need to prove that the local mean payoff in each checkpoint of $\run$ lies between $\mu$ and $\nu$, or equivalently, that for all $k\in \Nset_0$ it holds $\rho(\lambda_k) \in [\frac{n}{2},\frac{n}{2}+\frac{1}{5}]$. 
From the first item in the definition of $\sigma$ it follows that the assignment $\eta$ induced by $\lambda_0$ (and thus also by all $\run_k$, $k\geq 0$, by self-consistency) is balanced and satisfies $\varphi$, since it can be obtained by a root-leaf traversal of $\calT$. By~\eqref{eq:hardness-cost} it holds $\rho(\run_k) = A(\run_k) + B(\run_k) + (B(\run_k) + C(\run_k))/10 + F(\run_k)$. From balancedness of $\eta$ we get $A(\run_k) + B(\run_k) =n/2$. Now we distinguish two cases. Either $\lambda_k$ satisfies the first item in Claim~\ref{cl:hardness}. Then $F(\run_k)=0$ and $B(\run_k) + C(\run_k) \leq 2$, since  $\sigma$ visits at most one state of the form $r_j^{i,z(j,i)}$ in each of the gadgets $\game_j^{\mathit{sat}}$. In particular, $\frac{n}{2}\leq \rho(\run_k) \leq \frac{n}{2}+\frac{1}{5}$. Or $\lambda_k$ satisfies the first item in Claim~\ref{cl:hardness}. In this case case $B(\run_k) + C(\run_k) = 1$: $\run_k$ intersects just one $G_j^{\mathit{sat}}$ hence at most one state of the form $r_j^{i,z(i,j)}$, and at the same time it visits all states of the form $s_i^j$ and one successor of each such state. Since $\eta$ satisfies $C_j$, from the definition of $\sigma$ it follows that $\run_k$ contains at least one $r_j^{i,z(j,i)}$. Moreover, from Claim~\ref{cl:hardness} we get $-\frac{1}{10}\leq F(\run_k)\leq 1/10$. This again implies  $\frac{n}{2} \leq \rho(\run_k) \leq \frac{n}{2}+\frac{1}{5}$.

For the other direction, let $\sigma\in \Sigma_\game$ be a strategy $\sigma$ such that for all $\pi \in \Pi_\game$ the run $\outcome{\sigma}{\pi}{s_0^1}$ satisfies the window-stability objective $(W,D,\rho,\mu,\nu)$. We use $\sigma$ to inductively construct a tree $\calT$. We start with a single node labelled by $1$. In each step, we do the following for all leaves $v$ of the current tree: let $x_i$ be the label of $v$. If $Q_i =\forall$, we append to $v$ two children labelled by ${i+1}$, labelling one of the new edges by $0$ and the other by $1$. If $Q_i=\exists$, then let $h=s_0^1,t_0^{1,\ell_1},s_0^2,t_0^{2,\ell_2},\dots,t_{0}^{i-1,\ell_{i-1}},s_0^i$ be the unique path in $\game$ from $s_0^1$ to $s_0^i$ such that $\ell_1,\dots,\ell_{i-1}$ is the sequence of edge-labels on the unique path from root to $v$ in the current tree. We append to $v$ a child labelled by $i+1$ connected via an edge labelled by $x\in\{0,1\}$ s.t. $\sigma(h)=t_0^{i,x}$.

To show that $\calT$ is a balanced model of $\psi$ we need to prove that for all $\pi\in \Pi_\game$ the assignment $\eta_\pi$ induced by $(\outcome{\sigma}{\pi}{s_0^1})_0$ is balanced and satisfies $\varphi$. So fix any $\pi$ and denote $\run =\outcome{\sigma}{\pi}{s_0^1}$. We start with proving that \emph{each} $\run_k$, $k\in\Nset_0$, induces a balanced assignment. Assume that there is $k$ such that this is not true. Recall~\eqref{eq:hardness-cost} and note that $B(\run_k)+C(\run_k)\leq 6$ (due to Claim~\ref{cl:hardness} and since $\varphi$ is in 3-CNF), $F(\run_k)\in[-1/10,1/10]$ (Claim~\ref{cl:hardness}) and $A({\run_k})+B(\run_k)\in \{0,1,\dots,n\}\smallsetminus\{n/2\}$ (non-balancedness). From~\eqref{eq:hardness-cost} it follows that either $ \rho(\run_k)\leq \frac{n}{2}-\frac{3}{10}$ or $ \rho(\run_k)\geq\frac{n}{2}+\frac{9}{10}$ a contradiction with $\sigma$ satisfying the window-stability objective.

For satisfaction of $\varphi$ we first show 
that $\run$ is self-consistent. Assume the contrary. Then there is $k$ such that the balanced assignments induced by $\run_k$ and $\run_{k+1}$ differ. Let $s$ be the first state of $\run_k$ and $s'$ the last state of $\run_{k+1}$. Note that $|\rho(\run_k)-\rho(\run_{k+1})|=|\rho(s)-\rho(t)|$. Since $\run_k$ and $\run_{k+1}$ induce different assignments, it holds $|\rho(s)-\rho(t)|\geq 9/10$. But then at least one of the values $\rho(\run_k),\rho(\run_{k+1})$ falls outside of the interval $[\frac{n}{2},\frac{n}{2}+\frac{1}{5}]$, a contradiction with $\sigma$ satisfying the window-stability objective. 

Now we prove that the assignment $\eta_\pi$ induced by $\run_0$ (and thus by all $\run_k$) satisfies $\varphi$. Assume that there is a clause $C_j$ not satisfied by $\eta_\pi$, and let $k$ be the smallest index such that $\run_k$ begins with a state $u_{j-1}^{2j} $ (at least one such $k$ must exist due to construction of $\game$). Note that $\run_k$ ends with a state $u_j^{2j-1}$ and $F(\run_k) = -\frac{1}{10}$, and $\game_j^{\mathit{sat}}$ is the only gadget of the form $\game_{j'}^{\mathit{sat}}$ intersected by $\run_k$. Since we assume that the assignment induced by $\run_k$ does not satisfy $C_j$, $\run_k$ does not contain any state of the form $r_j^{i,z(j,i)}$. Hence, $B(\run_k)+C(\run_k)=0$ and $A(\run_k)=n/2$ (as $\eta_\pi$ is balanced). From~\eqref{eq:hardness-cost} it follows that $\rho(\run_k)=\frac{n}{2}-1/10$, a contradiction with $\sigma$ satisfying the window-stability objective. This finishes the proof.

Note that if all the quantifiers in $\psi$ are existential (i.e. $\psi$ is an instance of $\mathsf{Balanced}$-$\mathsf{3}$-$\mathsf{SAT}$), then $\game$ does not contain any states of Player $\Diamond$. This shows the $\NP$-hardness for graphs. %
\end{proof}

\section{Proofs for Section~\ref{sec-results-variance}}
\label{app:variance}
\begin{reftheorem}{thm:variance-in-NP}
The existence of a strategy achieving a given one-dimensional variance-stability objective for a given state of a given graph is in $\NP$. Further, a finite description of a strategy achieving the objective is computable in exponential time, and the strategy may require infinite memory. 
\end{reftheorem}

\noindent
Let us consider a game $\game=(S, (S_\Box,S_\Diamond),E)$ and an instance of the variance-stability problem determined by a reward function $\varrho$ together with a mean-payoff bound $b\in \Qset$ and a variance bound $c\in \Qset$. We assume that all outcomes are initiated in a fixed initial state $\bar{s}$.

A {\em frequency vector} is a tuple $\left(f_e\right)_{e\in E}\in [0,1]^{|E|}$
with $\sum_{e\in E} f_e=1$ and 
\[
\sum_{s':(s',s)\in E} f_{(s',s)}=\sum_{s':(s,s')\in E} f_{(s,s')}
\]
for all $s\in S$.
Now consider the following constraints:
\begin{equation}\label{eq:f_const}
\lr := \sum_{s\in S} f_s\cdot \varrho(s) \geq a
\end{equation}
\begin{equation}\label{eq:g_const}
\va := \sum_{s\in S} f_s\cdot (\varrho(s)-\lr)^2  \leq b
\end{equation}
Here $f_s=\sum_{(s',s)\in E} f_{(s',s)}$ for every $s\in S$.

As every single player game is a special case of a Markov decision
 process, we may invoke Proposition~5. of~\cite{BCFK13} and obtain the following lemma.
\begin{proposition}[\cite{BCFK13}]\label{prop:freqs-exist}
Assume that there is a solution to the given variance-stability problem.
Then there is a frequency vector $\left(f_e\right)_{e\in E}$ 
satisfying the inequalities (\ref{eq:f_const}) and (\ref{eq:g_const}).
All $e\in E$ satisfying $f_e>0$ belong to the same strongly connected component of $\game$ reachable from $\bar{s}$.
\end{proposition}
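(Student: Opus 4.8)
The plan is to extract a frequency vector from a witnessing run by a compactness argument; alternatively, since a graph is a deterministic one-player MDP and the variance-stability objective is an instance of the long-run variance objective of~\cite{BCFK13}, the statement follows directly from Proposition~5 of~\cite{BCFK13}. I describe the self-contained route, as it exposes the real content. Fix a strategy $\sigma$ solving the problem and let $\run = s_0,s_1,\dots$ be the induced run from $\bar{s}$, so that $\lr_\varrho(\run)\geq a$ and $\va_\varrho(\run)\leq b$. For every $n$ I would define the empirical edge-frequency vector $f^{(n)}\in[0,1]^{|E|}$ by letting $f^{(n)}_e$ be the fraction of indices $i<n$ with $(s_i,s_{i+1})=e$. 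Each $f^{(n)}$ sums to $1$ and violates flow-conservation at every state by at most $1/n$ (the difference between in- and out-flow at a state is at most one, depending on the endpoints of the prefix). These vectors live in a compact set, so passing to a subsequence $f^{(n_k)}\to f$ yields a limit that sums to $1$ and satisfies flow-conservation exactly, hence is a frequency vector.

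Next I would verify the two inequalities for this $f$. Writing $f^{(n)}_s=\sum_{(s',s)\in E}f^{(n)}_{(s',s)}$, the length-$n$ average reward equals $\sum_{s}f^{(n)}_s\,\varrho(s)$. Because $\lr_\varrho(\run)$ is a $\liminf$ that is $\geq a$, every subsequential limit of these averages is $\geq a$, so $\sum_s f_s\varrho(s)\geq a$, which is~(\ref{eq:f_const}). Similarly, $\va_\varrho(\run)$ is a $\limsup$ that is $\leq b$, so every subsequential limit of $\frac1n\sum_{i<n}(\varrho(s_i)-\lr_\varrho(\run))^2$ is $\leq b$, giving $\sum_s f_s(\varrho(s)-\lr_\varrho(\run))^2\leq b$. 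The remaining gap is that~(\ref{eq:g_const}) measures the spread of $f$ around its \emph{own} mean $\sum_{s'}f_{s'}\varrho(s')$ rather than around $\lr_\varrho(\run)$; I would close it with the elementary fact that the second central moment of a distribution is minimized at its mean, so
\[
\sum_s f_s\Big(\varrho(s)-\sum_{s'}f_{s'}\varrho(s')\Big)^2 \;\leq\; \sum_s f_s\big(\varrho(s)-\lr_\varrho(\run)\big)^2 \;\leq\; b,
\]
which is exactly~(\ref{eq:g_const}).

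For the connectivity claim I would argue that any edge $e$ with $f_e>0$ is traversed infinitely often by $\run$, since an edge used only finitely often has $f^{(n)}_e\to0$. The endpoints of such edges are precisely the states visited infinitely often, and any two such states lie on a common cycle of $\game$ (the run travels between them infinitely often), hence in a single strongly connected component; this component is reachable from $\bar{s}$ because $\run$ starts there.

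I expect the main obstacle to be exactly the reconciliation flagged above: the definitions are asymmetric (mean payoff via $\liminf$, variance via $\limsup$) and the target inequality~(\ref{eq:g_const}) centers the variance at the frequency vector's own mean, not at $\lr_\varrho(\run)$. Fortunately the $\liminf$/$\limsup$ signs cooperate, since every subsequential limit automatically lands on the correct side of each bound, and the centering mismatch is resolved by the minimality of the variance at the mean; thus no delicate simultaneous-limit argument is required. Deferring to~\cite{BCFK13} instead would subsume all of this in their general MDP machinery, the reduction to a deterministic MDP being immediate.
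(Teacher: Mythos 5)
Your proposal is correct, but it takes a genuinely different route from the paper: the paper's entire argument for this proposition is a one-line deferral to Proposition~5 of~\cite{BCFK13}, justified by viewing the graph as a degenerate Markov decision process, whereas you give a self-contained proof by extracting $f$ as a subsequential limit of empirical edge frequencies. Your argument holds up at every point I checked: the flow-conservation defect of a length-$n$ prefix is $O(1/n)$ and vanishes in the limit; since $\lr_\varrho(\run)$ is a $\liminf$ bounded below and $\va_\varrho(\run)$ a $\limsup$ bounded above, \emph{every} subsequential limit of the corresponding prefix averages lands on the correct side, so no joint choice of subsequence is needed; the recentering from $\lr_\varrho(\run)$ to $f$'s own mean via minimality of the second central moment is exactly the right fix for the mismatch between the run-based and frequency-based variance definitions; and positive-frequency edges are traversed infinitely often, forcing all their endpoints into one SCC reachable from $\bar{s}$. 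What your route buys is transparency --- it shows precisely why the asymmetric $\liminf$/$\limsup$ definitions and the centering discrepancy are harmless, neither of which is visible when the statement is imported wholesale --- at the cost of redoing work that~\cite{BCFK13} already packages (in a more general randomized-strategy MDP setting). Either justification would be acceptable here; if you keep the self-contained version, it would be worth stating explicitly that the limit vector's total mass is $1$ (so that the second-moment minimization applies), which is immediate but used silently.
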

The above inequalities (\ref{eq:f_const}) and (\ref{eq:g_const}) can be turned into a negative semi-definite program, using techniques of~\cite{BCFK13}, and hence decided in non-deterministic polynomial time~\cite{DBLP:journals/ipl/Vavasis90}. To finish our algorithm, we need to show that a solution to the above inequalities can also be turned into a strategy which visits each $e\in E$ with the frequency $f_e$. 

Let $\lambda=s_0s_1\ldots$ be a run. Given $e\in E$ and $i\in \Nset$ we define
\[a^e_i(\lambda)=\begin{cases}
	1 & \text{ if } (s_i,s_{i+1}) = e\\
	0 & \text{ otherwise}
\end{cases}\]
\begin{lemma}\label{lem:frequency_to_strategy}
Suppose $\left(f_e\right)_{e\in E}$ is a frequency vector such that all $e\in E$ satisfying $f_e>0$ belong to the same strongly connected component reachable from the initial state $\bar{s}$. Then there is a strategy $\sigma_f$ with $\lim_{i\rightarrow\infty} \frac{\sum_{j=0}^i a^e_i(\lambda)}{i+1} = f_e$ for all $e\in E$, where $\lambda$ is the
outcome under $\sigma_f$ (initiated in $\bar{s}$).
\end{lemma}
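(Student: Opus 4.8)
The plan is to build the outcome $\lambda$ explicitly as the concatenation of a finite ``transient'' prefix followed by an infinite sequence of finite closed walks of rapidly growing length, each realizing a \emph{rational} flow that approximates $f$ ever more closely. The prescribed (possibly irrational) frequencies then appear only in the limit. Throughout, write $C$ for the strongly connected component containing the support $\{e : f_e > 0\}$, and note that, since we work with a graph ($S_\Diamond = \emptyset$), the walk I construct is itself the unique outcome and directly defines $\sigma_f$.

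First I would dispose of the transient: since $C$ is reachable from $\bar{s}$, I fix a finite path from $\bar{s}$ to some vertex $\hat{s} \in C$ and let $\lambda$ begin with it. A finite prefix does not affect the limits $\lim_i \frac{1}{i+1}\sum_{j=0}^{i} a^e_j(\lambda)$, so it remains to produce, from $\hat{s}$, an infinite walk inside $C$ with the prescribed edge frequencies. The building block is the rational case. For a rational flow $g$ that is positive on \emph{all} edges of $C$, let $N$ be a common denominator and form the directed multigraph in which each edge $e$ of $C$ occurs $N g_e$ times. The balance condition forces in-degree $=$ out-degree at every vertex, and strong connectivity of $C$ forces connectivity of this multigraph; hence it admits an Eulerian circuit, which after a cyclic rotation may be based at $\hat{s}$. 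Call this closed walk $c_g$. Every edge $e$ then occurs in $c_g$ with relative frequency exactly $g_e$, and, crucially, because $c_g$ is a \emph{single} closed walk, every prefix of the $k$-fold repetition $c_g^{\,k}$ has edge frequencies within $O(|c_g|/m)$ of $g$ after $m$ steps. I would also record that rational full-support flows converging to $f$ exist: the flows supported in $C$ and summing to $1$ form a rational polytope, the full-support flows are a nonempty relatively open subset whose closure contains $f$ (move from $f$ slightly towards any fixed full-support flow), and rational points are dense; so I can pick rational full-support flows $g^{(n)} \to f$.

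Finally, the assembly: after the prefix, let $\lambda$ continue as $c_{g^{(1)}}^{\,k_1}\, c_{g^{(2)}}^{\,k_2}\, c_{g^{(3)}}^{\,k_3}\cdots$, where the repetition counts $k_n$ are chosen so fast-growing that the length $\ell_n$ of block $n$ dominates the total length of all earlier blocks (for instance $\ell_n \ge n\sum_{m<n}\ell_m$). I would then prove $\frac{1}{i+1}\sum_{j=0}^{i} a^e_j(\lambda) \to f_e$ for every $e$ as follows. For large $i$ lying inside block $n$, the count of $e$ splits into the contribution of the completed blocks $1,\dots,n-1$ and the contribution of a prefix of block $n$. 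The within-block uniformity makes the latter equal to $\ell' g^{(n)}_e \pm O(|c_{g^{(n)}}|)$ for the $\ell'$ steps already taken in block $n$, which is close to $f_e$ since $g^{(n)} \to f$; the domination condition forces the completed-block contribution, divided by the elapsed time, to be governed by the most recently completed block (a Toeplitz/weighted-Ces\`aro estimate), hence also close to $f_e$. Combining the two estimates gives convergence to $f_e$, uniformly in $e$.

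The hard part will be exactly this last convergence at \emph{arbitrary} times $i$, not merely at block boundaries, in the irrational regime. Two independent error sources must be controlled simultaneously: the fluctuation inside the current (very long) block, which I tame by repeating the short Eulerian cycle $c_{g^{(n)}}$ rather than using an arbitrary walk of the right frequencies; and the inertia of the earlier blocks, which I neutralize through the fast growth of $k_n$, so that the empirical average is always dominated either by the current block or by the immediately preceding one, both of which approximate $f$. Making these two bounds quantitative and uniform in $e$ is the technical heart of the argument, and is precisely the ``further limit process'' that the irrationality of $f$ forces upon us.
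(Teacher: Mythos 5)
Your proposal is correct and follows essentially the same route as the paper's proof: realize each rational full-support flow as an Eulerian circuit in a multigraph, approximate $f$ by rational full-support flows using the flow polytope, and concatenate ever-longer repetitions of the resulting cycles, bounding the in-block fluctuation and the inertia of the earlier blocks separately. The one point you leave implicit --- that the repetition counts must also be chosen so that the length of the \emph{next} block's Eulerian cycle is negligible compared to the time already elapsed, since otherwise the $O(|c_{g^{(n)}}|)$ fluctuation term need not vanish after division by the total elapsed time --- is precisely what the paper's condition $\varepsilon_i^2\cdot K_i \ge L_{i+1}$ enforces, and it is easily absorbed into your ``fast-growing $k_n$'' requirement.
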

\begin{proof}
Let us assume, w.l.o.g., that $\game$ itself is strongly connected.
If, $\left(f_e\right)_{e\in E}$ is rational and all edges $e$ satisfying $f_e>0$ induce a strongly connected graph, we may easily construct the strategy $\sigma_f$ as follows. 
We multiply all numbers $f_e$ with the least-common-multiple of their denominators and obtain a vector of natural numbers $f'_e$ that still satisfy the above flow equations. Now we may imagine the game as a multi-digraph, where each edge $e$ has the multiplicity $f'_e$. It is easy to show that the flow equations are exactly equivalent to existence of a directed Euler cycle. From this Euler cycle in the digraph we immediately get a cycle in our game which visits each edge exactly $f'_e$ times. By repeating the path indefinitely we obtain a run with the desired frequencies $f_e$ of edges.

Now consider a general frequency vector $\left(f_e\right)_{e\in E}$, i.e. the frequencies $f_e$ may be irrational and the graph induced by edges $e$ with $f_e>0$ does not have to be strongly connected. Then we may still approximate $\left(f_e\right)_{e\in E}$ it by a sequence of rational vectors $\left(f^i_e\right)_{e\in E}$, here $f^i_e\rightarrow f_e$ as $i\rightarrow \infty$, satisfying the flow equations (this follows from the fact that all frequency vectors satisfying the flow equations form a closed polyhedron). For each of the vectors we have a finite cycle $c_i$ which, when repeated indefinitely, gives the frequencies $\left(f^i_e\right)_{e\in E}$. We use $C_i^s$ for such infinite sequence, initiated in $s$, and, to make it easier to concatenate cycles, w.l.o.g. we suppose that for all $i$ we have $f^i_e > 0$.

Let $\varepsilon_0\varepsilon_1\ldots$ be a strictly decreasing sequence of numbers converging to $0$, with $\varepsilon_0\le 0.1$.
For all $i$, let $L_i$ be a number such that for all $L' \ge L_i$ and all $s$ we have
\[
 \frac{\sum_{j=0}^{L'} a^e_j(C_i^s)}{L'+1} \ge f^i_e - \varepsilon_i
\]

We define runs $\alpha_i$ and numbers $K_i$ as follows. We put $\alpha_1=C_1^{s_0}$ and $K_1 = 1$. 
Further, we define a $\alpha_i$ by taking $\alpha_{i-1}$ for $K_{i-1}$ steps, and then concatenating $C_i^s$, where $s$ is the $(K_{i-1}+1)$-th element of $\alpha_{i-1}$.
We let $K_i > K_{i-1}$ be a number such that
\[
 \frac{\sum_{j=0}^{K_i} a^e_j(\alpha_{i-1})}{K_i+1} \ge f^i_e - \varepsilon_i \text{\quad and\quad}\varepsilon_i^2 \cdot K_i \ge  L_{i+1}.
\]
Let $\alpha$ be the limit of the sequences $\alpha_i$ (note that $\alpha$ agrees with any $\alpha_i$ on the first $K_i$ steps).

We claim that, for all $e$,
\begin{equation}\label{eq:conv}
 \lim_{n\rightarrow\varepsilon} \frac{\sum_{j=0}^n a^e_j(\alpha)}{n+1} = f_e.
\end{equation}
Let $n$ be a number such that $K_i \le n \le K_{i+1}$.
We will show that, for all $e$,
\[
 \frac{\sum_{j=0}^n a^e_j(\alpha)}{n+1} \ge \min\{f^i_e,f^{i+1}_e\} - 2\cdot \varepsilon_i
\]
which by the convergence of $f^i$ and $\varepsilon_i$ will show \eqref{eq:conv}.
\begin{itemize}
\item If $n \le K_i + L_{i+1}$, then
 \begin{align*}
  \frac{\sum_{j=0}^n a^e_j(\alpha)}{n+1} &\ge \frac{\sum_{j=0}^n a^e_j(\alpha_i)}{n+1}
   \ge \frac{\sum_{j=0}^{K_i} a^e_j(\alpha_i)}{n+1}
   \ge \frac{K_i \cdot (f_e^i - \varepsilon_i)}{n+1}\\
   &\ge \frac{K_i \cdot (f_e^i - \varepsilon_i)}{K_i + L_{i+1}}
   \ge \frac{(K_i + L_{i+1}) \cdot (f_e^i - 2\cdot\varepsilon_i)}{K_i + L_{i+1}}
   \ge f_e^i - 2\cdot\varepsilon_i
 \end{align*}
 where the last-but-one inequality follows because
 \begin{align*}
  K_i \cdot (f_e^i - \varepsilon_i) &\ge (K_i +  L_{i+1}) \cdot (f_e^i - 2\cdot\varepsilon_i)\\
  K_i \cdot (f_e^i - \varepsilon_i) &\ge (K_i +  K_i\cdot \varepsilon_i^2) \cdot (f_e^i - 2\cdot\varepsilon_i)\\
   0 &\ge - K_i \cdot \varepsilon_i + K_i\cdot \varepsilon^2_i\cdot f_e^i - K_i\cdot 2\cdot \varepsilon^3_i\\
   0 &\ge K_i\cdot (-\varepsilon_i + \varepsilon^2_i\cdot f_e^i - 2\cdot \varepsilon^3_i)
 \end{align*}
 and $(-\varepsilon_i + \varepsilon^2\cdot f_e^i - 2\cdot \varepsilon^3_i)$ is negative as $\varepsilon_i\le 0.1$
\item
 On the other hand, if $K_i + L_{i+1} \le n \le K_{i+1}$, then
 \begin{align*}
  \frac{\sum_{j=0}^n a^e_j(\alpha)}{n+1} &\ge \frac{\sum_{j=0}^{K_i} a^e_j(\alpha) + \sum_{j=K_i+1}^{n} a^e_j(\alpha)}{n+1}\\
  &\ge \frac{(K_1+1) \cdot (f^i_e-\varepsilon_i) - (n+1-(K_1+1)) \cdot (f^{i+1}_e-\varepsilon_{i+1})}{n+1}\\
  &\ge \min\{f^i_e,f^{i+1}_e\}-\varepsilon_i
 \end{align*}\qedhere
\end{itemize}
\end{proof}

\end{document}